\documentclass[11pt]{article}
\usepackage{fullpage}
\usepackage{enumitem}
\usepackage{mathrsfs}
\usepackage[T1]{fontenc}
\usepackage{lmodern}
\usepackage{authblk}
\usepackage{verbatim}
\usepackage{bm}
\usepackage{algorithm}
\usepackage{xcolor}
\usepackage{stmaryrd}
\usepackage{makecell}
\usepackage{dsfont}
\usepackage[normalem]{ulem}
\usepackage[hypertexnames=false]{hyperref}

\newcommand{\suppress}[1]{}
\usepackage{amsmath,amssymb,amsthm}
\usepackage{thmtools}

\makeatletter
\@ifundefined{newcounteralias}{}{%
  \renewcommand\thmt@autorefsetup{%
    \@xa\def
    \csname\thmt@envname autorefname\@xa\endcsname
    \@xa{\thmt@thmname}%
  }%
}
\makeatother

\newtheorem{theorem}{Theorem}[section]
\newtheorem{corollary}[theorem]{Corollary}
\newtheorem{lemma}[theorem]{Lemma}
\newtheorem{claim}[theorem]{Claim}
\newtheorem{definition}[theorem]{Definition}

\newtheorem{notation}[theorem]{Notation}

\DeclareMathOperator{\im}{Im}

\DeclareMathOperator{\Span}{span}
\DeclareMathOperator{\rank}{rank}
\DeclareMathOperator{\Gr}{Gr}
\DeclareMathOperator{\Alt}{Alt}
\DeclareMathOperator{\sgn}{sgn}

\title{Beyond Kruskal: Polynomial-Time Tensor Decomposition under the Lovitz–Petrov Condition}
\author{
Shiri Sivan\\
Department of Computing and Mathematical Sciences, Caltech, USA\\
\texttt{ssivan@caltech.edu}\\
\href{https://orcid.org/0009-0004-3707-4844}{ORCID: 0009-0004-3707-4844}
\thanks{Supported in part by NSF CCF-2321079.}
}

\begin{document}
\maketitle

\abstract{Identifiability criteria certify that a given tensor decomposition is a unique rank decomposition. Kruskal's classical condition is one of the best-known deterministic criteria for identifiability. However, no polynomial-time decomposition algorithm is known under the Kruskal condition, and verifying the condition itself is NP-hard. Lovitz and Petrov introduced a strictly more general identifiability condition which, in contrast, is polynomial-time verifiable, but no polynomial-time decomposition algorithm was previously known under this condition.

We give a polynomial-time algorithm for tensor decomposition under the Lovitz--Petrov condition. Moreover, combining our algorithm with polynomial-time verification of the Lovitz--Petrov condition yields an efficient end-to-end certification procedure: after computing a decomposition, one can deterministically certify in polynomial time that it is unique and therefore of minimum rank. This contrasts with an arbitrary tensor decomposition, which certifies only an upper bound on the tensor rank, while determining tensor rank is NP-hard in general.
}

\tableofcontents

\section{Introduction}

Every tensor admits a representation as a sum of rank-one tensors, called a \emph{tensor decomposition}. The rank of a tensor is the minimum number of rank-one terms in such a decomposition. Tensor decompositions provide a natural way of extracting low-dimensional structure from multidimensional data and have found applications in areas including signal processing, machine learning, psychometrics, chemometrics, computer vision, and data analysis \cite{KoldaBader2009}.

Despite generalizing matrix factorization, tensor decompositions behave quite differently from their matrix counterparts. While the rank of a matrix can be computed efficiently by Gaussian elimination, computing the rank of an order-three tensor is NP-hard, already over \(\mathbb Q\) \cite{Hastad1990}, and remains NP-hard over \(\mathbb R\) and \(\mathbb C\) \cite{HillarLim2013}. On the other hand, tensors possess a useful property that matrices lack: \emph{uniqueness of decomposition}. A rank-one matrix decomposition is unique up to scaling, whereas higher-rank matrix decompositions are generally non-unique. In contrast, tensor decompositions can remain unique, up to permutation and scaling, at substantially higher ranks.

This uniqueness has concrete applications. In latent-variable models, uniqueness of tensor decomposition can imply identifiability of the statistical model \cite{AllmanMatiasRhodes2009}, while tensor decomposition algorithms can be used to recover the corresponding latent parameters \cite{AnandkumarEtAl2014}. Similarly, in chemometrics, uniqueness of the PARAFAC decomposition removes the rotational ambiguity inherent in matrix factorizations and can make it possible to recover pure spectra and concentrations \cite{Bro1997}. Related identifiability considerations arise in blind source separation and other signal-processing applications \cite{SidiropoulosEtAl2017}.

A complementary line of work studies the generic behavior of tensor decompositions, including generic rank and generic identifiability \cite{ComonTenBerge2009,ChiantiniOttaviani2012,ChiantiniOttavianiVannieuwenhoven2014}. Here we are concerned instead with \emph{specific identifiability}: certifying uniqueness for a particular decomposition \cite{ChiantiniOttavianiVannieuwenhoven2014,ChiantiniOttavianiVannieuwenhoven2017}. An arbitrary $r$-term decomposition certifies only that the tensor rank is at most $r$. An identifiability criterion, in contrast, can certify that the found decomposition is the unique rank decomposition.

For such certificates to be algorithmically useful, one would like both decomposition and verification to be efficient. Kruskal's classical condition \cite{Kruskal1977} is one of the best-known deterministic identifiability criteria, but verifying it is NP-hard (computing the Kruskal rank of a matrix is NP-hard and reduces to verifying Kruskal's condition \cite{TillmannPfetsch2014}). Domanov and De Lathauwer gave an algebraic decomposition algorithm under Kruskal's condition \cite{domanov2014canonical}, but it is not polynomial-time in general.\footnote{The running time was not analyzed in \cite{domanov2014canonical}. It is easy to verify that in the worst case, this algorithm is exponential in $r$.}

Lovitz and Petrov introduced a strictly more general deterministic identifiability criterion \cite{LovitzPetrov2023}, which we call Condition \eqref{eq: intro m-LP}. In contrast to Kruskal's condition, the \eqref{eq: intro m-LP} Condition can be verified in polynomial time (see Claim \ref{claim:LP-verification}), but no truly polynomial-time decomposition algorithm was previously known under it.\footnote{The algorithms by \cite{domanov2014canonical} and \cite{bhargava2026algorithmic} both run in exponential time when the components are "balanced", meaning they all have approximately the same Kruskal rank.} Our main result closes this gap: we give a polynomial-time decomposition algorithm under the \eqref{eq: intro m-LP} Condition.

Since Kruskal's condition implies \eqref{eq: intro m-LP}, our result in particular answers affirmatively the following question posed by Bhaskara et al.~\cite{bhaskara2014open}, for which they offered a \$100 prize:
\begin{center}
\emph{Is there a polynomial-time decomposition algorithm under the Kruskal condition?}
\end{center}
Moreover, combining our decomposition algorithm with polynomial-time verification of Condition \eqref{eq: intro m-LP} yields an efficient end-to-end certification procedure: after computing a decomposition, one can deterministically certify in polynomial time that it is the unique rank decomposition.

\subsection{Main results}

Let
$$
\mathcal T
=
\llbracket A^{(1)},\ldots,A^{(m)}\rrbracket
=
\sum_{i=1}^r
a_i^{(1)}\otimes\cdots\otimes a_i^{(m)}
\in
\mathbb R^{r_1\times\cdots\times r_m},
$$
where
$$
A^{(j)}
=
\begin{bmatrix}
a_1^{(j)}|\cdots|a_r^{(j)}
\end{bmatrix}
\in\mathbb R^{r_j\times r}.
$$

For \(S\subseteq[r]\), let \(A^{(j)}_S\) denote the restriction of
\(A^{(j)}\) to the columns indexed by \(S\).\\

The Lovitz--Petrov Condition is,
\begin{equation}
\label{eq: intro m-LP}
\sum_{j=1}^m \rank\left(A^{(j)}_S\right)
\geq
2|S|+m-1
\qquad
\text{for every }S\subseteq[r],\ |S|\geq2.
\tag{LP}
\end{equation}
For \(m=3\), this becomes

$$
\rank(A_S)+\rank(B_S)+\rank(C_S)
\geq
2|S|+2\qquad
\text{for every }S\subseteq[r],\ |S|\geq2.
$$

Condition~\eqref{eq: intro m-LP} implies uniqueness of the rank-\(r\)
decomposition and strictly generalizes the classical Kruskal condition. In contrast to the Kruskal condition, it allows substantial
rank deficiencies in the individual factor matrices and does not require any prescribed pair of modes to have sufficiently large Kruskal rank.

\paragraph{Computational model.}
We work in the algebraic model over $\mathbb{R}$ and measure the number
of arithmetic operations. The input tensor is given in compressed form (see Definition \ref{def: compression}), so for
\[
\mathcal T\in\mathbb R^{r_1\times\cdots\times r_m}
\]
we denote by
\[
N:=\prod_{j=1}^m r_j
\]
its input size, namely the number of scalar entries of $\mathcal T$.
In addition to field operations, we allow exact root finding for
univariate polynomials, with running time polynomial in the degree.
Random covectors are sampled from absolutely continuous distributions
over the relevant real vector spaces.

Under this computational model, our main result is an algorithmic proof
of the Lovitz--Petrov decomposition theorem~\cite[Theorem~2]{LovitzPetrov2023},
yielding a polynomial-time decomposition algorithm.

\begin{theorem}[Main theorem]
\label{thm: main}
Let
\[
\mathcal T
=
\llbracket A^{(1)},\ldots,A^{(m)}\rrbracket
\in
\mathbb R^{r_1\times\cdots\times r_m}
\]
be an \(m\)-way tensor with \(r\) rank-one summands satisfying Condition~\eqref{eq: intro m-LP}. Then this decomposition is unique (up to permutation and scaling of the columns).

Moreover, there is a randomized algorithm that, with probability one, recovers the factor matrices up to these ambiguities, and runs in $\operatorname{poly}(N)$ time.
\end{theorem}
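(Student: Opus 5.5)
We will prove the uniqueness part and the algorithmic part together by reducing the $m$-way problem to a sequence of structured matrix-pencil and subspace computations, each of which is polynomial time. The plan is the natural algorithmic version of the Lovitz--Petrov argument: their proof splits the modes into two groups and uses a splitting/induction on $|S|$. We make each step constructive, using the fact that Condition~\eqref{eq: intro m-LP} is a submodular-type rank inequality that holds for every subset.

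\textbf{Step 1 (compression and reduction to $m=3$).} First we compress each mode so that $r_j=\rank A^{(j)}$. Then we group the modes into three blocks $J_1\sqcup J_2\sqcup J_3=[m]$ and flatten, obtaining a 3-way tensor whose factors are Khatri--Rao products $\bigodot_{j\in J_t}A^{(j)}$. For Khatri--Rao products we have $\rank(\bigodot_{j\in J}A^{(j)}_S)\ge \sum_{j\in J}\rank A^{(j)}_S-|J|+1$; this is the key lemma used by Lovitz--Petrov. With it, \eqref{eq: intro m-LP} for $m$ modes implies the $3$-mode condition for the flattening. Once the $3$-way Khatri--Rao factors are recovered, each rank-one column of a flattened factor is reshaped and factored by an SVD, which gives the individual $a_i^{(j)}$. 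So it suffices to treat $m=3$ with $\rank A_S+\rank B_S+\rank C_S\ge 2|S|+2$.

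\textbf{Step 2 (the $m=3$ case via recursive splitting).} Take a random covector $\xi$ on mode $C$ and contract, giving $T(\cdot,\cdot,\xi)=A\,\mathrm{diag}(\xi^\top C)B^\top$. Contracting further with the subspace $\ker$ of suitable slices isolates subsets $S$. The main idea is to find a nontrivial subset $S$ for which the span $\Span\{a_i\otimes b_i: i\in S\}$ can be computed as an intersection or kernel of linear maps built from $\mathcal T$. Concretely, for a generic subspace $W\subseteq\mathbb R^{r_3}$ of codimension $k$, the tensor $\mathcal T$ contracted against $W^\perp$ involves only components with $c_i\in W^\perp$ \dots{} Instead, we use the Lovitz--Petrov splitting: choose the minimal $S$ where \eqref{eq: intro m-LP} is tight or near-tight. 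Such a set can be found in polynomial time by submodular minimization of $f(S)=\sum_j\rank A^{(j)}_S-2|S|$, since matroid rank is submodular. This is also how Claim~\ref{claim:LP-verification} is proved. Restricted to such an $S$, the condition lets us separate $\Span(A_S)$ from $\Span(A_{S^c})$ by computing the column space of the flattening $\mathcal T_{(1)}$ after projecting away modes $B$ and $C$ onto complements. We then recurse on the two sub-tensors $\mathcal T_S,\mathcal T_{S^c}$. Each inherits \eqref{eq: intro m-LP}, because the condition is hereditary over subsets and projection does not decrease the relevant ranks when the projection is chosen generically. The base case $|S|=1$ is a rank-one tensor.

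\textbf{Main obstacle.} The hard step is Step 2. We must show that, given only $\mathcal T$ and not the factors, one can compute a nontrivial split $S$ together with its component subspaces. The submodular minimizer is defined in terms of the unknown factors. We would therefore try to replace it by an intrinsic object: the minimal nonzero subspaces of the form $\ker(\mathcal T_{(1)}\text{ restricted to } U\otimes V)$ with $\dim U+\dim V$ prescribed. We would argue via a rank-counting lemma that these coincide with $\Span\{b_i\otimes c_i:i\in S\}$ for some $S$, and that they are computable by generic linear algebra (ranks of random specializations, hence probability one). Uniqueness then follows from the same argument, since every step is forced by $\mathcal T$. The total number of recursive calls is at most $r\le N$, and each call is polynomial in $N$.
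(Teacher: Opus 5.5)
Your proposal has two load-bearing steps, and neither works as stated.

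\textbf{Step 1.} The reduction to $m=3$ fails. The Khatri--Rao inequality $\rank(\bigodot_{j\in J}A^{(j)}_S)\ge\sum_{j\in J}\rank A^{(j)}_S-|J|+1$ is false whenever the right-hand side exceeds $|S|$ (e.g.\ two factors of full column rank). Capping it at $|S|$ breaks the summation, which no longer yields $2|S|+2$.

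In fact no flattening can work in general. Take $m=4$ and $r=3$:
\begin{itemize}
\item mode $1$ is collinear exactly on the pair $\{1,2\}$;
\item mode $2$ is collinear exactly on $\{1,3\}$;
\item mode $3$ is collinear exactly on $\{2,3\}$;
\item mode $4$ has full column rank $3$.
\end{itemize}
Each of modes $1,2,3$ then has rank $2$ on $[3]$. Every pair gives rank sum $1+2+2+2=7=2\cdot2+3$, and $S=[3]$ gives $2+2+2+3=9=2\cdot3+3$, so \eqref{eq: intro m-LP} holds. But any grouping of four modes into three blocks has two singleton blocks, and one of them is a mode $j\le 3$. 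On that mode's collinear pair the flattened rank sum is at most $1+2+2=5<6$. So the $m$-way condition does not descend to the $3$-way condition. The paper never flattens; it works directly in $X=\bigoplus_j\mathcal A_j$ with $m-2$ contractions.

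\textbf{Step 2.} This step contains no algorithm. As you acknowledge, the tight or near-tight sets from submodular minimization are defined by the unknown factors, and the ``intrinsic object'' meant to replace them is only conjectured. There are further problems:
\begin{itemize}
\item ``Tight or near-tight'' does not single out a split. When the inequality is strict for every $S$ (e.g.\ Kruskal-generic instances), nothing is distinguished.
\item Projecting modes onto complements generally lowers $\rank A^{(j)}_S$, because the condition permits heavy overlap between $\Span(A^{(j)}_S)$ and $\Span(A^{(j)}_{S^c})$. So the sub-tensors need not inherit the condition.
\item ``Every step is forced by $\mathcal T$'' does not prove uniqueness. A competing decomposition need not satisfy \eqref{eq: intro m-LP}, and your analysis says nothing about it.
\end{itemize}

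\textbf{The missing idea.} You need one computable object from which all components separate at once, with no recursion. The paper contracts $\Alt(\bar{\mathcal T})$ by a generic $P\in\mathbb R^{(m-2)\times D}$ to get $\Omega_P=\sum_i\omega_i(P)$.
\begin{itemize}
\item Each $\omega_i(P)$ is skew-symmetric of rank $2$, with image $\mathcal F_i\cap\ker P$.
\item Rado's theorem together with \eqref{eq: intro m-LP} makes these images a direct sum. Hence $\rank\Omega_P=2r$, which also forces every decomposition to have at least $r$ terms.
\item Replacing the first row of $P$ by a generic $q\in\ker\Omega_P$ rescales each $\omega_i$ by a distinct scalar $\rho_i$. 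So $\widetilde\Omega_Q\widetilde\Omega_P^{-1}$ has $r$ distinct eigenvalues with two-dimensional eigenspaces $U(L_i)$.
\item Uniqueness follows because any other $r$-term decomposition must produce the same eigenspaces of this same matrix.
\end{itemize}
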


The algorithm is spectral in nature. It constructs from
\(\mathcal T\) two skew-symmetric matrices

$$
\Omega_P,\Omega_Q\in\mathbb R^{D\times D},
\qquad
D:=\sum_{j=1}^m r_j,
$$

whose common component structure encodes the rank-one summands of
\(\mathcal T\). After restricting these matrices to
\(\im(\Omega_P)\), a generalized eigenvalue computation separates the
individual components. For \(m=3\), construction may be viewed as
an analogue of Jennrich's algorithm in a larger alternating space. The general \(m\)-way construction is obtained by contracting an alternating representation of the tensor along \(m-2\) generic covectors.

An additional consequence is that Condition~\eqref{eq: intro m-LP}
can itself be verified efficiently. Together with the decomposition
algorithm, this yields an end-to-end polynomial-time certificate of
rank and uniqueness for tensors satisfying the Lovitz--Petrov
condition.

\begin{corollary}[Polynomial-time certification]
\label{cor: certification}
Given a rank-\(r\) decomposition

$$
\mathcal T
=
\llbracket A^{(1)},\ldots,A^{(m)}\rrbracket,
$$

one can verify Condition~\eqref{eq: intro m-LP} in polynomial time.
Consequently, for tensors satisfying this condition, uniqueness of the
decomposition, tensor rank \(r\), and the recovered decomposition can
all be certified in polynomial time.
\end{corollary}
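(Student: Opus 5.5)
The plan is to reduce verification of Condition~\eqref{eq: intro m-LP} to polynomially many instances of submodular function minimization, and then combine this with Theorem~\ref{thm: main} to get the certification statement. For $S\subseteq[r]$ define
\[
f(S)=\sum_{j=1}^m \rank\bigl(A^{(j)}_S\bigr)-2|S| .
\]
Condition~\eqref{eq: intro m-LP} says exactly that $f(S)\ge m-1$ for every $S$ with $|S|\ge 2$. Each map $S\mapsto \rank(A^{(j)}_S)$ is the rank function of the linear matroid on the columns of $A^{(j)}$, so it is submodular. The term $-2|S|$ is modular. Hence $f$ is a submodular set function on $2^{[r]}$, and a value oracle for $f$ costs $m$ rank computations by Gaussian elimination, i.e.\ $\operatorname{poly}(N,r)$ arithmetic operations.

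The cardinality constraint $|S|\ge 2$ is not a lattice constraint, so I would remove it by enumerating pairs. For each pair $a<b$ in $[r]$, minimize $f$ over the family $\{S: \{a,b\}\subseteq S\}$. This family is isomorphic to the Boolean lattice on $[r]\setminus\{a,b\}$ via $T\mapsto T\cup\{a,b\}$. The restricted function $g_{ab}(T)=f(T\cup\{a,b\})$ is again submodular, since submodularity is preserved under this contraction.

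Each $g_{ab}$ can then be minimized exactly in strongly polynomial time by a combinatorial submodular minimization algorithm, for instance Iwata--Fleischer--Fujishige or Orlin, using polynomially many oracle calls. The condition holds if and only if
\[
\min_{a<b}\ \min_T g_{ab}(T)\ \ge\ m-1 .
\]
This costs $O(r^2)$ minimizations, so the total running time is polynomial. Here $r$ should be regarded as part of the input size, since the decomposition itself is given. The only point needing care is that every oracle value is an exact rank over $\mathbb R$, which Gaussian elimination provides in our algebraic model. I do not expect a real obstacle: the key observation is simply that $f$ is submodular.

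For the certification part, I would proceed as follows.
\begin{enumerate}
\item Run the algorithm of Theorem~\ref{thm: main} on $\mathcal T$ to obtain factor matrices $A^{(1)},\ldots,A^{(m)}$.
\item Check deterministically that $\llbracket A^{(1)},\ldots,A^{(m)}\rrbracket=\mathcal T$ by expanding the sum entrywise, which takes $O(rN)$ operations.
\item Check Condition~\eqref{eq: intro m-LP} for the recovered factors by the procedure above.
\end{enumerate}
If both checks pass, the first part of Theorem~\ref{thm: main} (the Lovitz--Petrov identifiability theorem) certifies that this decomposition is the unique one up to permutation and scaling. It also certifies that it is a rank decomposition, so the rank of $\mathcal T$ equals $r$.

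The one subtle point is that ``unique'' must mean that no decomposition with fewer than $r$ terms exists, not merely uniqueness among $r$-term decompositions. I would justify this as in \cite{LovitzPetrov2023}. Alternatively, note that a hypothetical shorter decomposition, padded with rank-one terms that cancel, would give a second $r$-term decomposition and contradict uniqueness. Importantly, the certificate never relies on the randomness of the algorithm: the random choices affect only whether a decomposition is found, while both checks are deterministic.
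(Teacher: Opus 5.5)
Your proposal is correct and follows essentially the paper's argument for Claim~\ref{claim:LP-verification}: the same submodular function $f(S)=\sum_j \rank A^{(j)}_S-2|S|$, minimized over supersets of each pair and compared against $m-1$, then combined with the decomposition algorithm and the uniqueness lemma (Lemma~\ref{lem: uniqueness}). Your deterministic reconstruction check and your padding argument for ruling out shorter decompositions are harmless additions. The padding argument works because Condition~\eqref{eq: intro m-LP} on pairs forbids proportional components. The paper instead obtains rank $r$ directly from $\rank\Omega_P=2r$.
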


Thus, within the class characterized by the Lovitz--Petrov condition,
both decomposition and post hoc verification admit polynomial-time
algorithms, despite the condition being strictly more general than the
classical Kruskal criterion.

\subsection{Technical overview}
\label{sec: tech overview}

Several tensor decomposition algorithms relevant to this work admit a common geometric interpretation, including Jennrich's algorithm~\cite{leurgans1993decomposition}, algorithms under the Kruskal condition~\cite{domanov2014canonical,bhargava2026algorithmic}, and the Lovitz--Petrov algorithm developed here. At a high level, these algorithms recover the rank-one components by probing a hyperplane arrangement along a one-dimensional family of contractions and reading the resulting intersection parameters from a matrix pencil.

For a three-way tensor

$$
\mathcal T=\sum_{i=1}^r a_i\otimes b_i\otimes c_i,
$$

the vectors \(c_i\) define hyperplanes

$$
H_i:=\{p\in\mathcal C^*:p(c_i)=0\}.
$$

Jennrich's algorithm chooses two covectors \(p,q\in\mathcal C^*\) and considers the family \(p+tq\). For generic \(p,q\), this line meets the hyperplanes \(H_i\) at distinct parameter values. Algebraically, these parameters are the generalized eigenvalues of the corresponding matrix pencil, while the associated one-dimensional eigenspaces recover the individual (contracted) rank-one components. More generally, algorithms under the Kruskal condition may be viewed as choosing a suitable one-dimensional edge of the arrangement and recovering the maximal intersections along that edge by the same matrix-pencil mechanism.

Naturally, one might attempt to apply the same construction directly to a Lovitz--Petrov tensor. This fails because Condition~\eqref{eq: m-LP} does not guarantee that any prescribed pair of factor matrices has full column rank. Moreover, contractions may create collinear columns in the contracted mode, thereby violating a necessary condition for uniqueness. Thus, in general, no fixed mode yields a matrix pencil whose rank-one components can be separated into one-dimensional eigenspaces.

Our approach is to replace the original tensor by an alternating tensor in the larger space

$$
X:=\mathcal A_1\oplus\cdots\oplus\mathcal A_m.
$$

After \(m-2\) alternating contractions, each lifted rank-one summand becomes a skew-symmetric matrix of rank two. The Lovitz--Petrov condition guarantees, for a generic choice of contractions, that the corresponding two-dimensional image spaces are independent. A second choice of contractions changes each component matrix only by a scalar factor. Consequently, the resulting pair of skew-symmetric matrices admits a Jennrich-type spectral decomposition, now with two-dimensional eigenspaces.

In the three-way case, a single contraction cuts each three-dimensional component space to a two-dimensional subspace. These subspaces play the role of the one-dimensional eigenspaces in Jennrich's algorithm. The quotient geometry induced by their direct sum gives rise to a new hyperplane arrangement, and a generic one-dimensional family of contractions meets these hyperplanes at distinct points. These intersection parameters become the generalized eigenvalues of the matrix pencil.

For an \(m\)-way tensor, the same geometry is obtained by replacing the single contraction with \(m-2\) contractions. The resulting two-dimensional component spaces again form a direct sum, and replacing one of the contractions produces a second matrix whose restriction to each component space is a scalar multiple of the first. This yields the same spectral separation mechanism as in the three-way case.

\subsection{Organization}
The remainder of the paper is organized as follows. Section~\ref{sec: preliminaries} introduces the notation and basic multilinear-algebraic tools used throughout the paper, including contractions and the alternating representation of a tensor. Section~\ref{sec: decomposition} contains the decomposition algorithm and its analysis. We first describe the geometric picture underlying the algorithm (Section \ref{sec: geometric perspective}), and then construct the alternating contractions used to realize this geometry in Section \ref{sec: construction}.  After illustrating the construction in the three-way case, we give the general \(m\)-way formulation and show how the resulting contractions can be computed efficiently. We then present the decomposition algorithm and prove its correctness (see Section \ref{sec: m-decomposition alg}). The remaining subsections establish the genericity and linear-algebraic properties underlying the spectral step. Section~\ref{sec: verification} shows that Condition~\eqref{eq: m-LP} can be verified deterministically in polynomial time. We prove uniqueness of decomposition separately in Section \ref{sec: uniqueness}. Deferred proofs and elementary facts about alternating contractions appear in the \hyperref[sec: appendix]{Appendix}.

\section{Preliminaries}
\label{sec: preliminaries}

\begin{definition}[CP decomposition]
Let
\[
\mathcal T\in \mathcal A_1\otimes\cdots\otimes\mathcal A_m.
\]
A \emph{CP decomposition} of $\mathcal T$ is an expression
\[
\mathcal T
=
\sum_{i=1}^r
a_i^{(1)}\otimes\cdots\otimes a_i^{(m)},
\qquad
a_i^{(j)}\in\mathcal A_j.
\]
The smallest $r$ for which such a decomposition exists is the
\emph{tensor rank} of $\mathcal T$.
\end{definition}

\begin{definition}[Compressed tensor]
\label{def: compression}
Let
\[
\mathcal T
=
\sum_{i=1}^r
a_i^{(1)}\otimes\cdots\otimes a_i^{(m)}
\in
\mathbb R^{r_1}\otimes\cdots\otimes\mathbb R^{r_m},
\]
and let
\[
A_j
=
\begin{bmatrix}
a_1^{(j)}|\cdots|a_r^{(j)}
\end{bmatrix}
\in\mathbb R^{r_j\times r}
\]
be its $j$th factor matrix.
We say that $\mathcal T$ is \emph{compressed} if
\[
\rank A_j=r_j
\qquad\text{for every }j\in[m].
\]
\end{definition}

Any tensor can be efficiently compressed, by restricting each mode to the
column space of the corresponding mode unfolding, and a decomposition of the
compressed tensor can be efficiently lifted to a decomposition of the
original tensor; see, e.g., \cite{KoldaBader2009,DeLathauwer2000}.

\begin{definition}[Contraction]
\label{def: contraction}
Let
\[
\mathcal T\in\mathcal A_1\otimes\cdots\otimes\mathcal A_m,
\]
let $j\in[m]$, and let $\varphi\in\mathcal A_j^*$.
The \emph{contraction of $\mathcal T$ along the $j$th mode by $\varphi$}
is
\[
\mathcal T\times_j\varphi
:=
(\operatorname{id}\otimes\cdots\otimes
\varphi\otimes\cdots\otimes\operatorname{id})(\mathcal T)
\in
\bigotimes_{k\neq j}\mathcal A_k.
\]
Thus, if
\[
\mathcal T
=
\sum_{i=1}^r
a_i^{(1)}\otimes\cdots\otimes a_i^{(m)}\in\mathbb{R}^{r_1\times\dots \times r_m},
\]
then
\[
\mathcal T\times_j\varphi
=
\sum_{i=1}^r
\varphi(a_i^{(j)})
\bigotimes_{k\neq j}a_i^{(k)}\in\mathbb{R}^{r_1\times\dots\times\widehat{r}_j\times\dots \times r_m}.
\]
In particular, contraction takes an $m$-way tensor to an $(m-1)$-way tensor ($\widehat{r}_j$ denotes the omission of $r_j$).
\end{definition}

\begin{notation}[Canonical embedding]
\label{notation: canonical embedding}
Let
\[
X=\bigoplus_{j=1}^m \mathcal A_j,
\]
We denote $\bar a^{(j)}\in X$, the canonical embedding of $ a^{(j)}\in\mathcal{A}_j$ in $X$.

This notation extends tensorwise. If
\[
\mathcal T
=
\sum_{i=1}^r
a_i^{(1)}\otimes\cdots\otimes a_i^{(m)}
\in
\mathcal A_1\otimes\cdots\otimes\mathcal A_m,
\]
we denote its canonical embedding into $X^{\otimes m}$ by
\[
\bar{\mathcal T}:
=
\sum_{i=1}^r
\bar a_i^{(1)}\otimes\cdots\otimes\bar a_i^{(m)}
\in X^{\otimes m}.
\]
\end{notation}

The decomposition algorithm will be described in elementary terms (see Algorithm \ref{alg: LP-decomposition}). However, it will be beneficial to understand them as contractions of the alternating representation of the tensor.

\begin{definition}[Alternating tensors]
For \(k\geq 1\), let

$$
\operatorname{Alt}^k(X)
:=
\left\{
T\in X^{\otimes k}:
\Pi_\sigma T
=
\operatorname{sgn}(\sigma)\,T
\text{ for every }\sigma\in S_k
\right\},
$$

where

$$
\Pi_\sigma
(x_1\otimes\cdots\otimes x_k)
:=
x_{\sigma(1)}\otimes\cdots\otimes x_{\sigma(k)}
$$

and \(\Pi_\sigma\) is extended linearly to \(X^{\otimes k}\).

We refer to \(\operatorname{Alt}^k(X)\) as the space of alternating
\(k\)-tensors on \(X\). 
\end{definition}

Over \(\mathbb R\), \(\operatorname{Alt}^k(X)\) is canonically isomorphic to the exterior power \(\Lambda^k X\), and it is common to identify the two. In this paper, we work with the concrete realization
\(\operatorname{Alt}^k(X)\subseteq X^{\otimes k}\).

\begin{definition}[Alternating representation]
\label{def: Alt}
For \(k\geq 1\), define the linear map

$$
\operatorname{Alt}_k:X^{\otimes k}\longrightarrow \operatorname{Alt}^k(X)
$$

on rank-one tensors by

$$
\operatorname{Alt}_k(x_1\otimes\cdots\otimes x_k)
:=
\sum_{\sigma\in S_k}
\operatorname{sgn}(\sigma)\,
x_{\sigma(1)}\otimes\cdots\otimes x_{\sigma(k)},
$$

and extend linearly.
\end{definition}

\section{Decomposition under the Lovitz-Petrov Condition}
\label{sec: decomposition}
Let us start by presenting the Lovitz-Petrov Condition. Let
\[
\mathcal T
=
\sum_{i=1}^r
a_i^{(1)}\otimes\cdots\otimes a_i^{(m)}
\in
\mathbb R^{r_1\times\cdots\times r_m},
\]
be an \(m\)-way tensor. Denote for every $j\in[m]$,
\[
A^{(j)}=\begin{bmatrix}
    a_1^{(j)}\vert \dots\vert a_r^{(j)}
\end{bmatrix}\in\mathbb{R}^{r_j\times r},
\]
and $A^{(j)}(S)$ is the restriction of $A^{(j)}$ to columns $S\subseteq[r]$.

 We say that $\mathcal T$ satisfies the \eqref{eq: m-LP} Condition if
\begin{equation}
\sum_{j\in[m]}\rank A^{(j)}(S)\geq 2|S|+m-1
\qquad
\text{for every }S\subseteq[r]\text{ with }|S|\geq2.
\tag{LP}
\label{eq: m-LP}
\end{equation}

We provide a useful equivalent formulation.
Denote
\[
X:=\bigoplus_{j=1}^m \mathcal A_j,
\qquad
D:=\sum_{j=1}^m r_j,
\]
where $\mathcal A_j\cong\mathbb R^{r_j}$ is the span of the columns of
$A^{(j)}$.

For every $i\in[r]$, let $\bar a_i^{(j)}\in X$ denote the canonical
embedding of $a_i^{(j)}\in\mathcal A_j$ into $X$, and define
\[
F_i
:=
\begin{bmatrix}
\bar a_i^{(1)}|\cdots|\bar a_i^{(m)}
\end{bmatrix}
\in\mathbb R^{D\times m}.
\]
We write
\[
\mathcal F_i
:=\im F_i=
\Span\{\bar a_i^{(1)},\dots,\bar a_i^{(m)}\}.
\]
Since the columns of $F_i$ lie in distinct direct summands of $X$, 
\[
\dim\left(\sum_{s\in S}\mathcal{F}_s\right)=\sum_{j=1}^m\rank A^{(j)}(S)
\]
for all $S\subseteq[r]$. We therefore obtain an equivalent formulation of the Lovitz-Petrov condition.
\begin{equation}
\dim\left(\sum_{s\in S}\mathcal{F}_s\right)\geq 2|S|+m-1
\qquad
\text{for every }S\subseteq[r]\text{ with }|S|\geq2.
\label{eq: m-LP F-form}
\end{equation}

Before heading to the technical core of this paper, we provide a brief conceptual framework. Section \ref{sec: construction} describes how to build contractions, the main building block of the algorithm.
At Section \ref{sec: m-decomposition alg} we present the decomposition algorithm along with a proof of the main theorem.
\subsection{Geometric Perspective}
\label{sec: geometric perspective}
Recall that
$$
X:=\mathcal A_1\oplus\cdots\oplus\mathcal A_m,
\qquad
D:=\dim X=\sum_{j=1}^m r_j,
$$
where $r_j:=\rank A^{(j)}$. Set $d:=m-2$.

As in the Jennrich algorithm \cite{leurgans1993decomposition}, or the Kruskal algorithm \cite{domanov2014canonical}, our algorithm starts by generating a contraction. However, under the \eqref{eq: m-LP} condition, the contraction is applied to the alternating representation of the tensor (see Definition \ref{def: Alt}) rather than directly to the tensor itself. Specifically, for an $m$-way tensor, we apply $d=m-2$ contractions and obtain the alternating matrix
\[
\Omega_\mathcal{P}=C_\mathcal{P}(\Alt(\bar{\mathcal{T}}))\in\mathbb{R}^{D\times D}
\]
where $\mathcal{P}$ is an ordered set of $d$ contractions $p,s_1,\dots,s_{d-1}\in X^*$ and $C_P$ is the associated contraction (see Definition \ref{def: first mode contraction}). 

We then choose a generic $q\in\ker\Omega_\mathcal{P}$ and compute $\Omega_\mathcal{Q}$ where $\mathcal{Q}$ is the ordered set $q,s_1,\dots,s_{d-1}\in X^*$.

Having fixed such a contraction \(\mathcal P\), the choice of \(q\) amounts to choosing a generic line

$$
P(t)=
\begin{bmatrix}
p+tq\\
s_1\\
\vdots\\
s_{d-1}
\end{bmatrix}
$$

in the Grassmannian \(\Gr(d,\ker\Omega_{\mathcal P})\). Indeed, by anti-commutativity of contractions, every row of \(\mathcal P\) lies in \(\ker\Omega_{\mathcal P}\), while \(q\in\ker\Omega_{\mathcal P}\) by construction. Hence every row of \(P(t)\) lies in \(\ker\Omega_{\mathcal P}\), and, for generic \(q\), the row spaces of \(P(t)\) form a one-dimensional family in \(\Gr(d,\ker\Omega_{\mathcal P})\).

Each subspace \(\mathcal F_i\subseteq X\) determines a "hyperplane"
$$
 H_i
:=
\left\{
R\in\Gr(d,\ker\Omega_\mathcal{P})|\;
\rank(RF_i)<d
\right\}.
$$
We use the term hyperplane because the rank-drop condition $\rank(RF_i)<d$ is given by a single linear constraint in Pl\"ucker coordinates. A generic Grassmannian line \(P(t)\) intersects each \( H_i\) at a unique point \(P(t_i)\), where the \(i\)-th contracted component becomes degenerate. Since no other component vanishes at the same point, the rank of the full contracted skew-symmetric matrix drops by exactly two. The matrix-pencil step recovers precisely these intersection parameters. Indeed, along the line the contracted matrix depends linearly on \(t\),

$$
\Omega_{\mathcal P(t)}
=
\Omega_{\mathcal P}+t\Omega_{\mathcal Q}.
$$

The values \(t_i\) at which the pencil loses rank are the negative reciprocals of the generalized eigenvalues of the pencil, up to the usual choice of parameterization. 

Thus the Lovitz-Petrov algorithm has the same geometric structure as Jennrich's method: choose a line through the relevant arrangement, use a matrix pencil to locate its intersection points with the component hyperplanes, and use the corresponding spectral subspaces to recover the individual components. The difference is that here the line lies in a Grassmannian, and the eigenspace splits into 2-dimensional subspaces, rather than 1-dimensional.

\subsection{Contraction of the alternating representation}
\label{sec: construction}

In this section we describe the contractions $\Omega_P$.
For ease of presentation, we first describe the 3-way case. The general \(m\)-way construction appears in the following \hyperref[subsec: m-way contraction]{section}.

\subsubsection{3-way contractions}

For every $p=(\alpha,\beta,\gamma)\in X^*$, define the three contractions

$$
T_\gamma:=\sum_{i=1}^r \gamma(c_i)a_ib_i^\top,
\qquad
T_\beta:=\sum_{i=1}^r \beta(b_i)a_ic_i^\top,
\qquad
T_\alpha:=\sum_{i=1}^r \alpha(a_i)b_ic_i^\top.
$$
These contractions are computable from $\mathcal{T}$.\footnote{The Jennrich/Kruskal algorithms use a contraction in one mode. Here we use contractions in all directions, which serve as blocks of the exterior-type contraction.} For example, $T_\alpha=\mathcal{T}\times_A \alpha$ is the sum of $A$-mode slices associated with $\alpha\in \mathcal{A}^*$.

Using these contractions, define for every $p=(\alpha,\beta,\gamma)\in X^*$ the skew-symmetric matrix
\[
    \Omega_p:=
\begin{bmatrix}
0&T_\gamma&-T_\beta\\
-T_\gamma^\top&0&T_\alpha\\
T_\beta^\top&-T_\alpha^\top&0
\end{bmatrix}\in\mathbb{R}^{D\times D}.
\]

Equivalently, $\Omega_p=\sum_{i=1}^r \omega_i$, where

\[
\omega_i:=
\begin{bmatrix}
0&\gamma(c_i)a_ib_i^\top&-\beta(b_i)a_ic_i^\top\\
-\gamma(c_i)b_ia_i^\top&0&\alpha(a_i)b_ic_i^\top\\
\beta(b_i)c_ia_i^\top&-\alpha(a_i)c_ib_i^\top&0
\end{bmatrix}\in\mathbb{R}^{D\times D}.
\]

The summands $\omega_1,\dots,\omega_r$ are not directly visible from the tensor. In fact, computing these factors is the main objective of the decomposition algorithm.

\subsubsection{m-way contractions}
\label{subsec: m-way contraction}

In the general $m$-way case, the contraction becomes somewhat more difficult to describe. We first define the  contraction. Later, we provide two equivalent formulations: an efficiently computable form (see Corollary \ref{cor: computational T[j,k]}) and an algebraic one (see Lemma \ref{lem: conceptual T[j,k]}).

Let
\[
\mathcal{T}=\sum_{i=1}^r a_i^{(1)}\otimes\dots\otimes a_i^{(m)}\in\mathbb{R}^{r_1\times \dots\times r_m}
\]
be an $m$-way tensor satisfying the \eqref{eq: m-LP} Condition.

Recall that $D=\sum_{j\in[m]} r_j$ and $X=\bigoplus_{j\in[m]}\mathcal{A}_j\subseteq \mathbb{R}^D$ where $\mathcal{A}_j\cong\mathbb{R}^{r_j}$ is the space spanned by the columns of $A^{(j)}$.

\begin{definition}[Alternating contraction]
\label{def:alternating-contraction}
Let $d:=m-2$, let
\[
P=
\begin{bmatrix}
p_1\\
\vdots\\
p_d
\end{bmatrix}
\in\mathbb R^{d\times D},
\]
and denote by $p_t^{(s)}\in\mathcal A_s^*$ the restriction of $p_t$
to $\mathcal A_s$.

For $j<k$, write
\[
[m]\setminus\{j,k\}
=
\{s_1<\cdots<s_d\}.
\]
We define
\begin{equation}
\label{eq: def T[j,k]}
\begin{aligned}
T_{j,k}(P)
:={}&
(-1)^{j+k+1}\sum_{\pi\in S_{d}}
\operatorname{sgn}(\pi)\,
\mathcal T
\times_{s_1}p_{\pi(1)}^{(s_1)}
\cdots
\times_{s_{d}}p_{\pi(d)}^{(s_{d})}\in\mathcal A_j\otimes\mathcal A_k.
\end{aligned}
\end{equation}
See Definition \ref{def: contraction} for $s$-mode contractions $\times_s p^{(s)}$.
\end{definition}

Now let $\Omega_P\in\mathbb{R}^{D\times D}$ with blocks $\Omega_P[j,k]\in \mathbb{R}^{r_j\times r_k}$ defined by
\begin{equation}
\label{eq: Omega_P}
\Omega_P[j,k]=
\begin{cases}
  T_{j,k}& j<k\\
  -\left(T_{k,j}\right)^\top& j>k\\
  \mathbf{O}_{r_j\times r_j}& j=k
\end{cases}
\end{equation}

Then we have $\Omega_P[j,k]=-\Omega_P[k,j]^\top$ for all $j,k\in[m]$, so $\Omega_P$ is skew-symmetric. \\

The contraction, as defined blockwise by Equation \eqref{eq: Omega_P} lacks in two ways: computationally, and conceptually. Computationally, it does not reveal an efficient way of computing it, as directly evaluating this expression would require summing over $d!$ permutations. In general, this is not polynomial in the input size $N:=\prod_{\ell=1}^m r_\ell$. Conceptually, the contraction should be viewed as a contraction on the alternating representation. This is not immediately apparent from Equations \eqref{eq: def T[j,k]} and \eqref{eq: Omega_P}. Both of these issues can be resolved by expressing the contraction in some basis of $\mathcal{A}_j\otimes\mathcal{A}_k$. The following lemma provides the machinery for these alternative representations.

\begin{restatable}[Determinantal representation]{lemma}{determinantalrep}
\label{lem: determinantal-representation}
For every $s\in[m]$, fix an arbitrary basis
\[
e_1^{(s)},\ldots,e_{r_s}^{(s)}
\]
of $\mathcal A_s$, and write
\[
\mathcal T
=
\sum_{\alpha\in[r_1]\times\cdots\times[r_m]}
t_\alpha\,
e_{\alpha_1}^{(1)}
\otimes\cdots\otimes
e_{\alpha_m}^{(m)}.
\]
For every multi-index
\[
\alpha=(\alpha_1,\ldots,\alpha_m),
\]
define
\[
F_\alpha
:=
\begin{bmatrix}
\bar e_{\alpha_1}^{(1)}
\mid\cdots\mid
\bar e_{\alpha_m}^{(m)}
\end{bmatrix}
\in\mathbb R^{D\times m},
\qquad
M_\alpha:=PF_\alpha\in\mathbb R^{d\times m},
\]
where $\bar e_{\alpha_s}^{(s)}\in X$ denotes the canonical embedding
of $e_{\alpha_s}^{(s)}$ in $\mathbb{R}^D$.

Then, for every $j<k$, the matrix $T_{j,k}(P)$ defined by Equation \eqref{eq: def T[j,k]} satisfies
\[
T_{j,k}(P)
=
(-1)^{j+k+1}
\sum_{\alpha}
t_\alpha\,
\det\left(M_\alpha^{\widehat{j,k}}\right)
e_{\alpha_j}^{(j)}
\otimes
e_{\alpha_k}^{(k)},
\]
where $M_\alpha^{\widehat{j,k}}$ denotes the $d\times d$ matrix
obtained from $M_\alpha$ by deleting columns $j$ and $k$.
\end{restatable}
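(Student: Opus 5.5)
The plan is to use linearity of $T_{j,k}(P)$ in $\mathcal T$. The defining expression \eqref{eq: def T[j,k]} is a signed sum of iterated mode contractions, and each mode contraction $\times_s\varphi$ is linear in the tensor. So it suffices to prove the identity for a single basis tensor
\[
e_{\alpha_1}^{(1)}\otimes\cdots\otimes e_{\alpha_m}^{(m)}
\]
with coefficient $t_\alpha$, and then sum over all multi-indices $\alpha$.

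For a fixed basis tensor, Definition~\ref{def: contraction} applied successively along the modes $s_1<\cdots<s_d$ of $[m]\setminus\{j,k\}$ gives
\[
e_{\alpha_1}^{(1)}\otimes\cdots\otimes e_{\alpha_m}^{(m)}
\times_{s_1}p_{\pi(1)}^{(s_1)}\cdots\times_{s_d}p_{\pi(d)}^{(s_d)}
=
\prod_{\ell=1}^d p_{\pi(\ell)}^{(s_\ell)}\bigl(e_{\alpha_{s_\ell}}^{(s_\ell)}\bigr)\;
e_{\alpha_j}^{(j)}\otimes e_{\alpha_k}^{(k)}.
\]
Each mode is contracted only once, so the order of the contractions does not matter and each step just produces a scalar factor. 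The surviving modes are $j$ and $k$, in that order because $j<k$, which places the result in $\mathcal A_j\otimes\mathcal A_k$.

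The next step is to identify the scalars with entries of $M_\alpha$. Since $\bar e_{\alpha_s}^{(s)}$ is the canonical embedding of $e_{\alpha_s}^{(s)}$ into $X$, we have
\[
p_t\bigl(\bar e^{(s)}_{\alpha_s}\bigr)=p_t^{(s)}\bigl(e^{(s)}_{\alpha_s}\bigr).
\]
Hence $(M_\alpha)_{t,s}=(PF_\alpha)_{t,s}=p_t^{(s)}(e^{(s)}_{\alpha_s})$. The deleted matrix $M_\alpha^{\widehat{j,k}}$ keeps the columns $s_1<\cdots<s_d$ in their original order, so its $(t,\ell)$ entry is $p_t^{(s_\ell)}(e^{(s_\ell)}_{\alpha_{s_\ell}})$. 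Summing the product above over $\pi\in S_d$ with weight $\operatorname{sgn}(\pi)$ gives
\[
\sum_{\pi\in S_d}\operatorname{sgn}(\pi)\prod_{\ell=1}^d \bigl(M_\alpha^{\widehat{j,k}}\bigr)_{\pi(\ell),\ell}.
\]
This is the Leibniz formula for $\det\bigl((M_\alpha^{\widehat{j,k}})^\top\bigr)=\det M_\alpha^{\widehat{j,k}}$. Multiplying by $(-1)^{j+k+1}$ and $t_\alpha$ and summing over $\alpha$ yields the claimed formula.

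There is no real obstacle here. The only points needing care are the index bookkeeping: checking that the permutation acts on the row index $t$ while the column order $s_1<\cdots<s_d$ is preserved, so the Leibniz sum matches the determinant up to transpose, and noting that $\det$ is invariant under transposition.
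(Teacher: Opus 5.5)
Your proposal is correct and follows essentially the same route as the paper: you reduce by linearity to pure (basis) tensors, note that contracting modes $s_1<\cdots<s_d$ contributes the scalar $\prod_{\ell}p_{\pi(\ell)}^{(s_\ell)}\bigl(e_{\alpha_{s_\ell}}^{(s_\ell)}\bigr)$, and recognize the signed sum over $\pi\in S_d$ as the Leibniz expansion of $\det\bigl(M_\alpha^{\widehat{j,k}}\bigr)$. Your extra bookkeeping about which index the permutation acts on, and the transpose-invariance of the determinant, just spells out what the paper compresses into its one-line appeal to the Leibniz formula.
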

The proof is deferred to the \hyperref[app: determinantalrep]{Appendix}.

\begin{corollary}[Computational form of $T_{j,k}$]
\label{cor: computational T[j,k]}
For every $s\in[m]$, let
\[
e_1^{(s)},\ldots,e_{r_s}^{(s)}
\]
denote the standard basis of $\mathcal A_s\simeq\mathbb R^{r_s}$, and write
\[
\mathcal T
=
\sum_{\alpha\in[r_1]\times\cdots\times[r_m]}
\mathcal T[\alpha]\,
e_{\alpha_1}^{(1)}
\otimes\cdots\otimes
e_{\alpha_m}^{(m)}.
\]
For every multi-index
\[
\alpha=(\alpha_1,\ldots,\alpha_m),
\]
define
\[
E_\alpha
:=
\begin{bmatrix}
\bar e_{\alpha_1}^{(1)}
\mid\cdots\mid
\bar e_{\alpha_m}^{(m)}
\end{bmatrix}
\in\mathbb R^{D\times m},
\qquad
G_\alpha:=PE_\alpha.
\]
Then, for every $j<k$,
\begin{equation}
\label{eq: computational T[j,k]}
T_{j,k}(P)
=
(-1)^{j+k+1}
\sum_{\substack{
\alpha\in[r_1]\times\cdots\times[r_m]}}
\mathcal T[\alpha]\,
\det\left(G_\alpha^{\widehat{j,k}}\right)
e_{\alpha_j}^{(j)}
\otimes
e_{\alpha_k}^{(k)}\in\mathcal{A}_j\otimes\mathcal{A}_k.
\end{equation}
Hence $T_{j,k}$ can be computed directly from the input tensor and $P$,
without knowing the CP decomposition. Furthermore, there are only $N=\prod_{j\in[m]}r_j$ summands in this expression.
\end{corollary}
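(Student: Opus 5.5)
This follows directly from Lemma~\ref{lem: determinantal-representation}, applied with the standard basis $e^{(s)}_1,\dots,e^{(s)}_{r_s}$ of each $\mathcal A_s\cong\mathbb R^{r_s}$. The lemma allows an arbitrary basis, and the standard basis is one. In this basis the coordinates $t_\alpha$ of $\mathcal T$ are exactly the entries $\mathcal T[\alpha]$ of the input array. The matrices $F_\alpha$ become the matrices $E_\alpha$, whose columns are standard basis vectors of $\mathbb R^D$ sitting in the blocks of $X$. Hence $M_\alpha=PF_\alpha$ coincides with $G_\alpha=PE_\alpha$. Substituting these identifications into the lemma's formula gives \eqref{eq: computational T[j,k]} verbatim, with the same sign $(-1)^{j+k+1}$.

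For the computational claims, I note what the formula depends on. The right-hand side of \eqref{eq: computational T[j,k]} involves only the array $\mathcal T[\alpha]$ and the matrix $P$: the $s$-th column of $G_\alpha$ is just the column of $P$ indexed by the coordinate $\alpha_s$ in block $s$. No factor vectors $a_i^{(j)}$ appear, so the CP decomposition is not needed. The sum ranges over $[r_1]\times\cdots\times[r_m]$, which has exactly $N=\prod_j r_j$ elements. Each summand needs one $d\times d$ determinant. That costs $O(d^3)$ by Gaussian elimination, and $d=m-2\le\log_2 N$ once all $r_j\ge2$; modes with $r_j=1$ are trivial and can be absorbed. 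The whole computation is therefore polynomial in $N$, which avoids the $d!$ permutation sum of \eqref{eq: def T[j,k]}.

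The only point needing care is the basis identification: $\mathcal A_s$ is defined as the column span of $A^{(s)}$, and I must check that it equals $\mathbb R^{r_s}$. This holds because the input is compressed (Definition~\ref{def: compression}), so $\rank A^{(s)}=r_s$ and the columns span all of $\mathbb R^{r_s}$. The standard basis is thus a legitimate basis of $\mathcal A_s$, and the canonical embedding $\bar e^{(s)}_{\alpha_s}$ is the corresponding standard vector of $\mathbb R^D$. I expect no real obstacle beyond this bookkeeping.
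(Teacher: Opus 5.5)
Your proposal is correct and follows the paper's proof: both specialize Lemma~\ref{lem: determinantal-representation} to the standard basis, identifying $t_\alpha=\mathcal T[\alpha]$ and $M_\alpha=G_\alpha$. Your extra remarks are consistent with the paper. Compression makes the standard basis a basis of $\mathcal A_s$, and the $N\cdot\operatorname{poly}(m)$ cost count matches the paper's own complexity analysis.
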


\begin{proof}
Apply Lemma~\ref{lem: determinantal-representation} to the standard-basis
expansion of $\mathcal T$.
\end{proof}

We defined $\Omega_P$ in terms of contractions of
$\mathcal T$ along its different modes (Equations \eqref{eq: def T[j,k]} and \eqref{eq: Omega_P}). We now show that the same matrix can
be obtained by repeatedly contracting the first mode of the alternating
representation $\Alt(\bar{\mathcal T})$.\footnote{See Definition \ref{def: Alt} for alternating tensor, and Notation \ref{notation: canonical embedding} for the canonical embedding notation $\mathcal{\bar T}$.} This formulation not only simplifies
the correctness proof, but also sheds light on the underlying mechanism of the algorithm. \\

We first introduce notation for repeated first-mode contractions.
\begin{definition}[First-mode contraction]
\label{def: first mode contraction}
Let $q\geq 2$ and $p\in X^*$. Define
\[
C_p:X^{\otimes q}\longrightarrow X^{\otimes(q-1)}
\]
by
\[
C_p(\mathcal S):=\mathcal S\times_1 p.
\]
(see Definition \ref{def: contraction} for contractions in general). For \(P\in\mathbb R^{\ell\times D}\) with ordered rows \(p_1,\ldots,p_\ell\in X^*\), define
\[
C_P:=C_{p_\ell}\circ\cdots\circ C_{p_1}.
\]
Thus,
\[
C_P:X^{\otimes q}\longrightarrow X^{\otimes(q-\ell)}.
\]
\end{definition}

In particular, when $\ell=m-2$, the operator $C_P$ maps
$X^{\otimes m}$ to $X^{\otimes 2}$, which we identify with
$\mathbb R^{D\times D}$ in the usual way,
\[
x\otimes y\longleftrightarrow xy^\top.
\]

We now present the algebraic formulation of $\Omega_P$.

\begin{lemma}[Algebraic form of $\Omega_P$]
\label{lem: conceptual T[j,k]}
Suppose
\[
\mathcal T
=
\sum_{i=1}^r
a_i^{(1)}\otimes\cdots\otimes a_i^{(m)}.
\]
Define
\[
P=
\begin{bmatrix}
p_1\\
\vdots\\
p_{m-2}
\end{bmatrix}
\in\mathbb R^{(m-2)\times D},
\]
and
\[
F_i
:=
\begin{bmatrix}
\bar a_i^{(1)}
\mid\cdots\mid
\bar a_i^{(m)}
\end{bmatrix},
\qquad
M_i:=PF_i.
\]
Then the matrix $T_{j,k}(P)$ defined in Equation \eqref{eq: def T[j,k]} satisfies
\begin{equation}
\label{eq: conceptual T[j,k]}
T_{j,k}(P)
=
(-1)^{j+k+1}
\sum_{i=1}^r
\det\left(M_i^{\widehat{j,k}}\right)
a_i^{(j)}\otimes a_i^{(k)},
\end{equation}
where $M_i^{\widehat{j,k}}$ denote the $(m-2)\times(m-2)$ matrix obtained
from $M_i$ by deleting columns $j$ and $k$.

Moreover, the matrix $\Omega_P$ defined in Equation~\eqref{eq: Omega_P}
satisfies
\[
\Omega_P=C_P\bigl(\Alt(\bar{\mathcal T})\bigr),
\]
where
\[
\bar{\mathcal T}
=
\sum_{i\in[r]}
\bar a_i^{(1)}\otimes\cdots\otimes\bar a_i^{(m)}
\in X^{\otimes m}.
\]
\end{lemma}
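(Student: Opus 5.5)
The plan is to prove both parts of Lemma~\ref{lem: conceptual T[j,k]} by expanding everything into rank-one summands and matching terms, using linearity in $\mathcal T$. We may therefore assume $\mathcal T=a^{(1)}\otimes\cdots\otimes a^{(m)}$ is a single rank-one term; write $M=PF$ with $F=[\bar a^{(1)}|\cdots|\bar a^{(m)}]$.

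\emph{First part.} Plug this rank-one tensor into Equation~\eqref{eq: def T[j,k]}. Each contraction $\times_{s_t}p_{\pi(t)}^{(s_t)}$ replaces $a^{(s_t)}$ by the scalar $p_{\pi(t)}(\bar a^{(s_t)})=M_{\pi(t),s_t}$. This holds because $p^{(s)}$ is the restriction of $p$ to $\mathcal A_s$ and $\bar a^{(s)}$ is the canonical embedding. The $d$ contractions act on distinct modes, so they commute and leave $a^{(j)}\otimes a^{(k)}$ untouched. Hence
\[
T_{j,k}(P)=(-1)^{j+k+1}\Bigl(\sum_{\pi\in S_d}\sgn(\pi)\prod_{t=1}^d M_{\pi(t),s_t}\Bigr)a^{(j)}\otimes a^{(k)}.
\]
The bracket is the Leibniz expansion of $\det$ of the $d\times d$ matrix whose columns are the columns $s_1<\cdots<s_d$ of $M$, i.e.\ of $\det(M^{\widehat{j,k}})$ (up to a transpose, which does not change the determinant). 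This gives Equation~\eqref{eq: conceptual T[j,k]}. In passing, the same computation with a basis expansion is essentially Lemma~\ref{lem: determinantal-representation}, which one could cite instead.

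\emph{Second part.} Compute
\[
C_P(\Alt(\bar{\mathcal T}))=\sum_{\sigma\in S_m}\sgn(\sigma)\prod_{u=1}^{d}p_u\bigl(\bar a^{(\sigma(u))}\bigr)\,\bar a^{(\sigma(m-1))}\otimes\bar a^{(\sigma(m))}.
\]
Group the permutations according to the ordered pair $(\sigma(m-1),\sigma(m))$.
\begin{itemize}
\item If $(\sigma(m-1),\sigma(m))=(j,k)$ with $j<k$, then $\sigma$ is determined by a bijection $\pi$ from the first $d$ positions to $\{s_1<\cdots<s_d\}$. We have $\sgn(\sigma)=\sgn(\pi)\cdot\sgn(s_1,\dots,s_d,j,k)$.
\item The tensor $\bar a^{(j)}\otimes\bar a^{(k)}$ sits exactly in block $(j,k)$ of $\mathbb R^{D\times D}$, so diagonal blocks vanish.
\item Summing over $\pi$ again gives $\det(M^{\widehat{j,k}})$ times the sign of the ordering $(s_1,\dots,s_d,j,k)$.
\item The pair $(k,j)$ differs from $(j,k)$ by a single transposition. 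It therefore contributes $-\det(M^{\widehat{j,k}})$ times that same sign, times $\bar a^{(k)}\otimes\bar a^{(j)}$. This is exactly $-(T_{j,k})^\top$ placed in block $(k,j)$, matching Equation~\eqref{eq: Omega_P}.
\end{itemize}

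The main obstacle is the sign bookkeeping: showing that the sign of the permutation $(s_1,\dots,s_d,j,k)$ is $(-1)^{j+k+1}$. I would count inversions directly. The $s$'s are increasing among themselves. The element $j$ is preceded by the $s$'s exceeding $j$, of which there are $m-j-1$, since $k$ is not among them. The element $k$ is preceded by $m-k$ larger $s$'s. The total is $2m-j-k-1\equiv j+k+1\pmod 2$, as required. Finally, I would double-check the convention $x\otimes y\leftrightarrow xy^\top$ and the row-versus-column indexing in the Leibniz expansion, so that no stray transpose flips a sign.
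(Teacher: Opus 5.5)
Your proposal is correct and follows essentially the same route as the paper. You reduce by linearity to a single rank-one term and get the first formula from the Leibniz expansion; the paper simply cites the determinantal-representation lemma, whose proof is exactly your pure-tensor computation. For the second part you expand $C_P(\Alt(\bar{\mathcal T}_i))$ over $S_m$, group by the last two positions, and note that swapping them flips the sign, which is also the paper's argument. Your explicit inversion count $(m-j-1)+(m-k)\equiv j+k+1 \pmod 2$ for $(s_1,\dots,s_d,j,k)$ supplies the sign identity that the paper only asserts.
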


\begin{proof}
Formula \eqref{eq: conceptual T[j,k]} follows directly from Lemma \ref{lem: determinantal-representation}.

It remains to compute the same quantity from the alternating representation. By Definition~\ref{def: Alt}, for $\bar{\mathcal{T}}_i=\bar a_i^{(1)}\otimes\dots\otimes\bar a_i^{(m)}$ we have
\[
\Alt(\bar{\mathcal T}_i)
=
\sum_{\sigma\in S_m}
\sgn(\sigma)\,
\bar a_i^{(\sigma(1))}
\otimes\cdots\otimes
\bar a_i^{(\sigma(m))}.
\]
Applying $C_P$ gives
\[
C_P\bigl(\Alt(\bar{\mathcal T}_i)\bigr)
=
\sum_{\sigma\in S_m}
\sgn(\sigma)
\left(
\prod_{q=1}^{m-2}
p_q\bigl(\bar a_i^{(\sigma(q))}\bigr)
\right)
\bar a_i^{(\sigma(m-1))}
\otimes
\bar a_i^{(\sigma(m))}.
\]

We now collect the terms whose final two factors are
$\bar a_i^{(j)}\otimes\bar a_i^{(k)}$.
Such permutations are precisely
\[
\sigma_\pi
=
\bigl(
s_{\pi(1)},\ldots,s_{\pi(m-2)},j,k
\bigr),
\qquad
\pi\in S_{m-2}.
\]
Since $s_1<\cdots<s_{m-2}$ and $j<k$,
\[
\sgn(\sigma_\pi)
=
(-1)^{j+k+1}\sgn(\pi).
\]
Hence the coefficient of
$\bar a_i^{(j)}\otimes\bar a_i^{(k)}$
in $C_P(\Alt(\bar{\mathcal T}_i))$ is
\[
(-1)^{j+k+1}
\sum_{\pi\in S_{m-2}}
\sgn(\pi)
\prod_{q=1}^{m-2}
p_q\bigl(\bar a_i^{(s_{\pi(q)})}\bigr)
=
(-1)^{j+k+1}
\det\!\left(M_i^{\widehat{j,k}}\right).
\]
Interchanging the final two factors changes the sign of the
permutation, so the coefficient of
$\bar a_i^{(k)}\otimes\bar a_i^{(j)}$
is the negative of this quantity. Summing over all $j<k$ yields
\begin{equation}
\label{eq: slicewise contraction}
C_P\bigl(\Alt(\bar{\mathcal T}_i)\bigr)
=
\sum_{1\leq j<k\leq m}
(-1)^{j+k+1}
\det\!\left(M_i^{\widehat{j,k}}\right)
\left(
\bar a_i^{(j)}\bigl(\bar a_i^{(k)}\bigr)^\top
-
\bar a_i^{(k)}\bigl(\bar a_i^{(j)}\bigr)^\top
\right).
\end{equation}

By linearity of $\mathrm{Alt}$ and $C_P$, Equation \ref{eq: conceptual T[j,k]}, and the definition of $\Omega_P$, we obtain
\[
\Omega_P=C_P(\mathrm{Alt}(\bar{\mathcal{T}}),
\]
as desired.
\end{proof}

We conclude this section by presenting an $r$-decomposition of the contraction $\Omega_P$.

\begin{corollary}[Decomposition of $\Omega_P$]
\label{cor: Omega= sum_i omega_i}
Let $P,F_i$ and $M_i$ as defined in Lemma \ref{lem: conceptual T[j,k]}. Define for all $i\in[r]$, $j<k\in[m]$,
\begin{equation}
    \label{eq: T[j,k] slice}
    T_{j,k}^{(i)}:=(-1)^{j+k+1}
\det\!\left(M_i^{\widehat{j,k}}\right)
\left(
 a_i^{(j)}\bigl( a_i^{(k)}\bigr)^\top\right)
\end{equation}
and
\begin{equation}
    \label{eq: omega_i via M}
    \omega_i(P):=\sum_{1\leq j<k\leq m}
(-1)^{j+k+1}
\det\!\left(M_i^{\widehat{j,k}}\right)
\left(
\bar a_i^{(j)}\bigl(\bar a_i^{(k)}\bigr)^\top
-
\bar a_i^{(k)}\bigl(\bar a_i^{(j)}\bigr)^\top
\right),
\end{equation}
Then $\Omega_P$, defined by Equation \eqref{eq: Omega_P} satisfies 
\begin{equation}
\label{eq: Omega= sum omegas}
\Omega_P=\sum_{i=1}^r\omega_i(P).
\end{equation}  
\end{corollary}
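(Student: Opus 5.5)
The plan is to derive the corollary directly from Lemma~\ref{lem: conceptual T[j,k]} by comparing blocks. I will first observe that the matrix $\omega_i(P)$ defined in \eqref{eq: omega_i via M} is exactly $C_P\bigl(\Alt(\bar{\mathcal T}_i)\bigr)$, where $\bar{\mathcal T}_i=\bar a_i^{(1)}\otimes\cdots\otimes\bar a_i^{(m)}$. Indeed, the right-hand side of \eqref{eq: omega_i via M} coincides term by term with the right-hand side of \eqref{eq: slicewise contraction}, which was established in the proof of Lemma~\ref{lem: conceptual T[j,k]}. Since $\bar{\mathcal T}=\sum_i\bar{\mathcal T}_i$ and both $\Alt$ and $C_P$ are linear, we get
\[
\sum_{i=1}^r\omega_i(P)=C_P\bigl(\Alt(\bar{\mathcal T})\bigr)=\Omega_P
\]
by the second part of Lemma~\ref{lem: conceptual T[j,k]}. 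This proves \eqref{eq: Omega= sum omegas}.

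As a check that does not route through the alternating formalism, I will also verify the identity blockwise. The matrix $\bar a_i^{(j)}(\bar a_i^{(k)})^\top$ is supported only in block $[j,k]$ of $\mathbb R^{D\times D}$, where it equals $a_i^{(j)}(a_i^{(k)})^\top$. Hence, for $j<k$, the $[j,k]$ block of $\omega_i(P)$ is exactly $T^{(i)}_{j,k}$ from \eqref{eq: T[j,k] slice}. The $[k,j]$ block is $-(T^{(i)}_{j,k})^\top$, and the diagonal blocks vanish.

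Summing over $i$ and invoking \eqref{eq: conceptual T[j,k]} gives $\sum_i T^{(i)}_{j,k}=T_{j,k}(P)$. This matches \eqref{eq: Omega_P} in every block: $[j,k]$ for $j<k$, $[k,j]$, and the diagonal.

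There is no genuine obstacle here. The only point needing care is bookkeeping: the canonical embedding places $a_i^{(j)}(a_i^{(k)})^\top$ in the correct block, and the transpose sign convention for $j>k$ agrees with \eqref{eq: Omega_P}. Both of these follow immediately from the definitions.
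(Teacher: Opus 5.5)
Your proposal is correct and follows the paper's own argument: you identify $\omega_i(P)$ with $C_P\bigl(\Alt(\bar{\mathcal T}_i)\bigr)$ via \eqref{eq: slicewise contraction}, then sum over $i$ using linearity and the identity $\Omega_P=C_P\bigl(\Alt(\bar{\mathcal T})\bigr)$ from Lemma~\ref{lem: conceptual T[j,k]}. Your additional blockwise check via \eqref{eq: conceptual T[j,k]} and \eqref{eq: Omega_P} is also valid, but it repeats the final step of that lemma's proof and so adds nothing new.
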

\begin{proof}
    The corollary follows directly from Lemma \ref{lem: conceptual T[j,k]} and Equation \eqref{eq: slicewise contraction}.
\end{proof}

Note that the summands $\omega_i(P)\in\mathbb{R}^{D\times D}$ of $\Omega_P$ are $(m\times m)$-block matrices, with $j,k\in[m]$ blocks
\begin{equation}
\label{eq: Omega_i}
\omega_i[j,k]=
\begin{cases}
  T_{j,k}^{(i)}& j<k\\
 -\left(T_{k,j}^{(i)}\right)^\top& j>k\\
  \mathbf{O}_{r_j\times r_j}& j=k
\end{cases}
\end{equation}
of dimensions $r_j\times r_k$. While $\Omega_P$ is directly computable from the tensor via equations \eqref{eq: def T[j,k]} and \eqref{eq: Omega_P} (or more efficiently via Corollary \ref{cor: computational T[j,k]}), computing the summands $\omega_i(P)$ is at the core of the decomposition algorithm. Indeed, the factors $a_i^{(j)}, a_i^{(k)}$ can be read off directly from the block $\omega_i[j,k]=T_{j,k}^{(i)}$ (up to scaling).

\subsection{A Decomposition algorithm under the LP Condition}
\label{sec: m-decomposition alg}

\begin{algorithm}[H]
    \caption{Decomposition under Condition LP}
    \label{alg: LP-decomposition}

    \textbf{Input:} A tensor $\mathcal{T}$ in compressed form (see Definition \ref{def: compression}), 
    with unknown decomposition
    \[
        \mathcal{T}=\llbracket A^{(1)},\dots,A^{(m)}\rrbracket\in\mathbb{R}^{r_1\times \dots\times r_m}
    \]
    satisfying Condition \eqref{eq: m-LP}. Let $D:=\sum_{j\in[m]}r_j$.

    \begin{enumerate}
        \item\label{mLP step: comp Omega_P} Choose random $P\in \mathbb{R}^{(m-2)\times D}$, with rows $p, s_1,\dots,s_{m-3}\in X^*$. and construct $\Omega_P\in\mathbb{R}^{D\times D}$ (per Equation \eqref{eq: Omega_P}). 
        \item\label{mLP step: comp Omega_Q} Choose a random covector $q\in\ker\Omega_P$. Let $Q\in\mathbb{R}^{(m-2)\times D}$ with rows $q,s_1,\dots,s_{m-3}\in X^*$.
        Construct  $\Omega_Q\in\mathbb{R}^{D\times D}$ (per Equation \eqref{eq: Omega_P}).

        \item\label{mLP step: comp U} Compute $U\in\mathbb{R}^{2r\times D}$ whose rows form an orthonormal basis of $L:=\im(\Omega_P)$.

        \item\label{mLP step: comp tildes and phi} Compute
        \[
            \widetilde{\Omega}_P:=U\Omega_PU^\top,
            \qquad
            \widetilde{\Omega}_Q:=U\Omega_QU^\top,
        \]
        and set
        \[
            \Phi:=
            \widetilde{\Omega}_Q
            \widetilde{\Omega}_P^{-1}.
        \]

        \item\label{mLP step: comp eigen-projection} Calculate the eigenvalues $ \{\rho_1,\ldots,\rho_r\}$ of $\Phi$ (see Lemma \ref{lem: LP-generic-spectral} Part \ref{part: Phi eigenspace}), and for all $i\in[r]$ calculate
        \[
            \Pi_i:=\prod_{j\neq i}\frac{\Phi-\rho_j I}{\rho_i-\rho_j}\in\mathbb{R}^{2r\times 2r},
        \]
     the spectral projector onto the $\rho_i$-eigenspace.

        \item\label{mLP step: recover omega_i} For every $i\in[r]$, recover the corresponding component matrix
        \[
            \omega_i=
            U^\top\Pi_i\widetilde{\Omega}_PU.
        \]
        \item\label{mLP step: recover ai's}
        For every $i\in[r]$, factor the rank-one blocks
        \[
            \omega_i[1,j],
            \qquad j=2,\dots,m,
        \]
        and recover
        \[
            \hat a_i^{(j)}\in\langle a_i^{(j)}\rangle,
            \qquad j\in[m].
        \]
        \item\label{mLP step: recover lambdas} Solve
        \[
        \mathcal{T}=\sum_{i\in[r]}\lambda_i \hat a_i^{(1)}\otimes\dots\otimes \hat a_i^{(m)}
        \]
        for $\lambda\in\mathbb{R}^r$.
        \item Update $\hat a_i^{(1)}\leftarrow \lambda_i\hat{a}_i^{(1)}$ for all $i\in[r]$.
    \end{enumerate}
    \textbf{Output:} Decomposition factors $\hat{A}^{(1)},\dots,\hat{A}^{(m)}$, equivalent to $A^{(1)},\dots,A^{(m)}$ up to permutation and scaling of the columns.
\end{algorithm}

\begin{theorem}[Algorithmic LP]
Let
\[
\mathcal{T}
=
\llbracket A^{(1)},\dots,A^{(m)}\rrbracket
\in
\mathbb{R}^{r_1\times\cdots\times r_m}
\]
be an $r$-term decomposition satisfying Condition~\eqref{eq: m-LP}.
Then this decomposition is a unique rank decomposition.

Moreover, for generic choices of the contractions in
Algorithm~\ref{alg: LP-decomposition}, the algorithm recovers
$A^{(1)},\dots,A^{(m)}$ up to permutation and componentwise scaling.
Equivalently, if the contractions are sampled from absolutely
continuous distributions, the algorithm succeeds with probability one.

Let $N:=\prod_{j=1}^m r_j$ be the input size in the algebraic model, namely the number of entries of
$\mathcal T$. Then Algorithm \ref{alg: LP-decomposition} performs $\operatorname{poly}(N)$ arithmetic operations
over $\mathbb R$, together with root finding for a univariate polynomial of degree at most $2r$.
\end{theorem}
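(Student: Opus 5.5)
Uniqueness is proved separately in Section~\ref{sec: uniqueness}, so here the plan targets correctness of Algorithm~\ref{alg: LP-decomposition} and its running time. Correctness rests on three claims about the summands $\omega_i(P)$ of Corollary~\ref{cor: Omega= sum_i omega_i}.

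\textbf{(a) Each summand is rank two with a controlled image.} By \eqref{eq: slicewise contraction}, $\omega_i(P)$ is the contraction of $\Alt(\bar{\mathcal T}_i)$, which is the alternating product of $\bar a_i^{(1)},\dots,\bar a_i^{(m)}$, by $m-2$ covectors. So $\omega_i(P)$ is a decomposable $2$-vector $u_i\wedge v_i$. Here $\Span\{u_i,v_i\}=:W_i(P)$ is the subspace of $\mathcal F_i$ annihilated by the rows of $P$, and it is $2$-dimensional whenever $PF_i$ has rank $m-2$. This holds generically, since $\dim\mathcal F_i=m$: taking $S=\{i,i'\}$ in \eqref{eq: m-LP F-form} forces each $\bar a_i^{(j)}\neq0$.

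\textbf{(b) Genericity: the $W_i(P)$ form a direct sum.} The claim is that for generic $P$, $\dim\sum_i W_i(P)=2r$. Then $L=\im\Omega_P=\bigoplus_i W_i(P)$ and $\Omega_P$ has rank exactly $2r$, which justifies $U\in\mathbb R^{2r\times D}$ and the invertibility of $\widetilde\Omega_P$. I expect this to be the main obstacle, and it is where \eqref{eq: m-LP F-form} enters. I would prove it by induction, cutting by one generic covector at a time. For a family of subspaces $V_s$ with $\dim\sum_{s\in S}V_s\ge c|S|+e$, intersecting with a generic hyperplane $\ker p$ gives $\dim\sum_S(V_s\cap\ker p)\ge\dim\sum_S V_s-|S|$. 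This is a matroid-union / Rado-type count, which I would prove by a Hall-type argument on a circuit $S$ of minimal size. Starting from $c=m$ with the LP slack $m-1$ over $2|S|$ and cutting $m-2$ times, one should land at $\dim\sum_S W_s\ge 2|S|$ for all $S$, which is exactly what is needed.

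\textbf{(c) The second contraction only rescales the summands.} I would show $\omega_i(Q)=\lambda_i\,\omega_i(P)$. Since $q\in\ker\Omega_P=L^\perp$ and $W_i(P)\subseteq L$, $q$ vanishes on $W_i(P)$. Both $W_i(P)$ and $W_i(Q)$ are then the annihilator in $\mathcal F_i$ of $\{s_1,\dots,s_{d-1}\}$ together with one more covector, and both contain the common $3$-dimensional annihilator $K_i$ of the $s$'s. The decomposable $2$-vector depends on the last covector only through its restriction to $K_i$, up to scale. Because $q|_{K_i}$ vanishes on $W_i(P)=\ker p|_{K_i}$, $q|_{K_i}$ is proportional to $p|_{K_i}$. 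This gives $\omega_i(Q)=\lambda_i\omega_i(P)$ with $\lambda_i=q(x_i)/p(x_i)$ for any $x_i\in K_i\setminus W_i(P)$.

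For generic $q$ in the kernel, the $\lambda_i$ should be pairwise distinct. Each equality $\lambda_i=\lambda_{i'}$ is a linear condition on $q$, so it suffices to exhibit one admissible $q$ with $\lambda_i\ne\lambda_{i'}$; I would construct it using the direct sum from (b) to pick $x_i$ independently of $L$. Then on $L$, $\widetilde\Omega_Q=\sum_i\lambda_i\widetilde\omega_i$ with $\widetilde\Omega_P=\sum_i\widetilde\omega_i$. Hence $\Phi$ acts as $\lambda_i$ on $U\,W_i(P)$, which is the content of Lemma~\ref{lem: LP-generic-spectral}. Its eigenvalues are $\rho_i=\lambda_i$, each with multiplicity two, and $\Pi_i\widetilde\Omega_P=\widetilde\omega_i$, so step~\ref{mLP step: recover omega_i} returns $\omega_i(P)$.

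\textbf{Recovery and cost.} The blocks $\omega_i[1,j]=T^{(i)}_{1,j}$ are nonzero multiples of $a_i^{(1)}(a_i^{(j)})^\top$ whenever the determinants $\det M_i^{\widehat{1,j}}$ are nonzero, which holds generically. Reading off the factors and solving the linear system for $\lambda$ finishes recovery, with a unique solution by the uniqueness theorem.

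Every generic requirement is the nonvanishing of a polynomial in $(P,q)$ that is nonzero at some point. So absolutely continuous sampling succeeds with probability one; for $q$ one parametrizes $\ker\Omega_P$ by a basis.

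For the running time, Corollary~\ref{cor: computational T[j,k]} computes each block with $N$ summands of $d\times d$ determinants, so $\Omega_P$ costs $\mathrm{poly}(N)$. Note $D\le N$ and $2r\le D$ (because $\rank\Omega_P\le D$). The remaining steps are linear algebra in dimension $D$, plus root finding for the characteristic polynomial of $\Phi$, which has degree $2r$.
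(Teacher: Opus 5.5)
\textbf{Gap 1: the induction in (b) does not close.} Your part (a), the rescaling identity $\omega_i(Q)=\lambda_i\omega_i(P)$ with $\lambda_i=q(x_i)/p(x_i)$, and the recovery and cost bookkeeping all agree with the paper. The two places where Condition \eqref{eq: m-LP} must actually be used both have gaps.

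Your inequality $\dim\sum_{s\in S}(V_s\cap\ker p)\ge\dim\sum_{s\in S}V_s-|S|$ holds for \emph{every} hyperplane, since each $V_s\cap\ker p$ has codimension at most one in $V_s$. So it uses no genericity and is far too weak. Condition \eqref{eq: m-LP} gives $2|S|+m-1$, not $m|S|+e$. Iterating your bound $m-2$ times therefore yields only $\dim\sum_{s\in S}W_s\ge 2|S|+m-1-(m-2)|S|$. This falls below $2|S|$ whenever $|S|\ge2$ and $m\ge4$, or $|S|\ge3$ and $m=3$.

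A generic cut actually costs one dimension per block of a partition of $S$, not one per subspace. The correct count is the Dilworth-truncation formula $\dim\sum_{s\in S}(V_s\cap\ker p)=\min_\pi\sum_{B\in\pi}\bigl(\dim\sum_{s\in B}V_s-1\bigr)$, which Lov\'asz showed generic hyperplane sections realize. Iterated $m-2$ times it becomes $\min_\pi\sum_{B\in\pi}\bigl(\dim\sum_{s\in B}\mathcal F_s-(m-2)\bigr)$, and \eqref{eq: m-LP} makes every block contribute at least $2|B|$.

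The paper avoids the induction altogether. Rado's theorem with quota $2$ per $\mathcal F_i$ gives a direct-sum collection in the irreducible variety $\mathcal X$, so good collections are generic there. Every good collection is then realized as $(\mathcal F_i\cap\ker P_0)_i$ with $P_0=QB$, where $B$ is a basis of $L^\perp$.

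\textbf{Gap 2: distinctness of the $\lambda_i$ cannot come from the direct sum in (b).} Fix $P$ and normalize $p(x_i)=1$. Then $\lambda_i-\lambda_{i'}=q(x_i-x_{i'})$, so an admissible $q$ separating them exists iff $x_i-x_{i'}\notin L$. This is a property of $P$, not of $q$, and it forces $\dim(\mathcal F_i+\mathcal F_{i'}+L)\ge 2r+m-1$.

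The direct sum only needs slack $m-2$ in \eqref{eq: m-LP}, and with that slack distinctness genuinely fails. Take $m=3$, $r=2$, $a_1=a_2=a$, with $b_1,b_2$ independent and $c_1,c_2$ independent. Then $\dim(\mathcal F_1+\mathcal F_2)=5=2\cdot 2+1$, and $W_1\oplus W_2$ is direct for generic $p$ because $\mathcal F_1\cap\mathcal F_2=\langle\bar a\rangle\not\subseteq\ker p$. Yet $D=5$ forces $L^\perp=\langle p\rangle$, so every admissible $q$ is a multiple of $p$ and $\lambda_1=\lambda_2$. This is as it must be for the non-unique tensor $a\otimes(b_1\otimes c_1+b_2\otimes c_2)$.

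The extra $+1$ in \eqref{eq: m-LP} enters exactly here. The paper obtains $\dim(\mathcal F_i+\mathcal F_j+L)\ge 2r+m-1$ from Rado's theorem with quotas $k_i=m$, $k_j=3$, $k_s=2$ (condition \eqref{eq:good-L-pair}). It then shows $v_i-v_j\notin L$ for generic $S=TB$.

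A smaller slip: for $m\ge4$, taking $S=\{i,i'\}$ does not force $\bar a_i^{(j)}\ne0$. Indeed $\dim(\mathcal F_i+\mathcal F_{i'})\le 2m-1$ is compatible with the requirement $\ge m+3$. Nonzero factors must simply be assumed, as the paper implicitly does.
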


\begin{proof}
    \textbf{Correctness:} By Lemma \ref{lem: uniqueness}, $\llbracket A^{(1)},\dots,A^{(m)}\rrbracket$ constitutes a unique rank decomposition. Thus, we only need to show that Algorithm \ref{alg: LP-decomposition} outputs these  components (up to scaling and permutation) with probability 1. 
    \begin{itemize}
        \item \textbf{$\Phi$ is well defined}. By Lemma \ref{lem: LP-generic-spectral} Part \ref{part: generic p}, for generic $P\in\mathbb{R}^{(m-2)\times D}$, $\rank\Omega_P=2r$. In Step \ref{mLP step: comp U}, U is chosen such that its rows form an orthonormal basis for $\im\Omega_P$. Since $\Omega_P$ is skew-symmetric, its row and column spaces are equivalent, so $\widetilde{\Omega}_P=U\Omega_PU^\top\in\mathbb{R}^{2r\times 2r}$ has full rank $2r$.
        \item \textbf{$\Phi$ has $r$ distinct, non-zero eigenvalues}. By Lemma \ref{lem: LP-generic-spectral} Part \ref{part: generic q}, 
        \[
        \Omega_Q=\sum_{i\in[r]}\rho_i\omega_i
        \]
        for some distinct, non-zero values $\rho_1,\dotsm\rho_r\in\mathbb{R}$. By Lemma \ref{lem: LP-generic-spectral} Part \ref{part: Phi eigenspace},  these are exactly the eigenvalues of $\Phi$, associated with eigenspaces $E_i:=U(L_i)$ ($L_i:=\im\omega_i$) and $\im\Phi=E_1\oplus\dots\oplus E_r$.  
        \item  \textbf{Recovery of $\omega_i$:} By Corollary \ref{cor: Omega= sum_i omega_i} $\Omega_P=\sum_{i=1}^r\omega_i$. Therefore, for all $i\in[r]$,
    \[
    U^\top\Pi_i\widetilde{\Omega}_PU=
    U^\top\Pi_i(U\Omega_P U^\top)U=
    U^\top\Pi_iU\Omega_P=
    U^\top\Pi_iU\sum_{j\in[r]}\omega_j=\omega_i,
    \]
    where $\omega_i$ is defined by Equation \eqref{eq: omega_i via M}.
    \item \textbf{Recovery of the factor matrices.}
For every $i\in[r]$ and $j\in\{2,\dots,m\}$, let
\[
F_i^{\widehat{1,j}}
\in\mathbb{R}^{D\times (m-2)}
\]
denote the matrix obtained from $F_i$ by deleting columns $1$ and $j$.
Since the columns of $F_i$ lie in distinct direct summands of $X$,
the matrix $F_i^{\widehat{1,j}}$ has full column rank $m-2$.
Consequently,
\[
P\longmapsto
\det\!\left(PF_i^{\widehat{1,j}}\right)
\]
is a nonzero polynomial in the entries of $P$. Hence, for generic $P$,
simultaneously for every $i\in[r]$ and $j=2,\dots,m$,
\[
\det\!\left(PF_i^{\widehat{1,j}}\right)\neq 0.
\]

By Equation~\eqref{eq: Omega_i}, the $(1,j)$ block of $\omega_i$ is therefore
\[
\omega_i[1,j]
=
(-1)^j
\det\!\left(PF_i^{\widehat{1,j}}\right)
a_i^{(1)}(a_i^{(j)})^\top,
\]
which is a nonzero rank-one matrix. Factoring these blocks for
$j=2,\dots,m$ recovers
\[
\widehat a_i^{(j)}\in\langle a_i^{(j)}\rangle,
\qquad j\in[m].
\]
Thus there exist nonzero scalars $\mu_i^{(j)}$ such that
\[
\widehat a_i^{(j)}
=
\mu_i^{(j)}a_i^{(j)}.
\]
Therefore
\[
\mathcal T
=
\sum_{i=1}^r
\lambda_i
\widehat a_i^{(1)}
\otimes\cdots\otimes
\widehat a_i^{(m)},
\qquad
\lambda_i
=
\left(\prod_{j=1}^m\mu_i^{(j)}\right)^{-1}.
\]
Hence Step~\ref{mLP step: recover lambdas} recovers the required
coefficients, and after absorbing $\lambda_i$ into
$\widehat a_i^{(1)}$, the output decomposition is equivalent to
$\llbracket A^{(1)},\dots,A^{(m)}\rrbracket$ up to scaling and
permutation.
    \end{itemize}

\textbf{Time Complexity:}
We may assume without loss of generality that $r_j\ge 2$ for every
$j\in[m]$, since any one-dimensional mode consists only of scalar
factors, which may be absorbed into another mode. Consequently,
\[
N=\prod_{j=1}^m r_j\ge 2^m,
\]
so $m\le \log_2 N$. Moreover,
\[
D=\sum_{j=1}^m r_j\le \prod_{j=1}^m r_j=N.
\]
Finally, applying Condition~\eqref{eq: m-LP} to $S=[r]$ gives $D\ge 2r+m-1$, and hence $r\le D\le N$.
Therefore any computation polynomial in $D,m,$ and $r$ is polynomial
in $N$.

Steps~\ref{mLP step: comp U} through
\ref{mLP step: recover omega_i} consist of standard linear-algebra
operations on matrices of dimensions at most $D\times D$, and hence
require $\operatorname{poly}(D,r)$ arithmetic operations.
Factoring the rank-one blocks in Step \ref{mLP step: recover ai's} also requires
$\operatorname{poly}(D,r,m)$ arithmetic operations.
Finally, Step~\ref{mLP step: recover lambdas}, after vectorizing the
tensor, amounts to solving a linear system with $N$ equations and $r$
unknowns, and therefore takes $\operatorname{poly}(N,r)$ arithmetic
operations.

It remains to bound the cost of constructing $\Omega_P$ and
$\Omega_Q$. Each requires computing $\binom{m}{2}$ blocks
$T_{j,k}$. By Corollary~\ref{cor: computational T[j,k]},
\[
T_{j,k}(P)
=
(-1)^{j+k+1}
\sum_{\alpha\in[r_1]\times\dots\times[r_m]}
\mathcal T[\alpha]\,
\det\!\left(G_\alpha^{\widehat{j,k}}\right)
e_{\alpha_j}^{(j)}\otimes e_{\alpha_k}^{(k)}.
\]
For fixed $j,k$, there are $N$ summands, and each determinant is of an
$(m-2)\times(m-2)$ matrix. Thus $T_{j,k}(P)$ can be computed in
\[
O\!\left(Nm^3\right)
\]
arithmetic operations. Computing all $\binom{m}{2}$ blocks therefore
takes
\[
N\cdot\operatorname{poly}(m)
\]
arithmetic operations.

Since $m\le\log_2N$, $D\le N$, and $r\le N$, all steps of
Algorithm~\ref{alg: LP-decomposition} run in
$\operatorname{poly}(N)$ arithmetic operations.
\end{proof}

\begin{lemma}[Generic spectral structure]
\label{lem: LP-generic-spectral}
Assume that $\mathcal{T}=\llbracket A^{(1)},\dots,A^{(m)}\rrbracket\in\mathbb{R}^{r_1\times\dots\times r_m}$ satisfies
Condition \eqref{eq: m-LP}. Let $\omega_i:=\omega_i(P)\in\mathbb{R}^{D\times D}$ as defined in Equation \eqref{eq: omega_i via M}.

Then the following hold.

\begin{enumerate}

\item\label{part: generic p}
Let $P\in \mathbb{R}^{(m-2)\times D}$ be a generic matrix with rows $p,p_1\dots,p_{m-3}\in X^*$. Write
\[
L_i:=\im\omega_i,\qquad
L:=\im\Omega_P.
\]
where $\Omega_P$ is computed in Step \ref{mLP step: comp Omega_P} of Algorithm \ref{alg: LP-decomposition}.
Then, for every $i\in[r]$,
\[
\dim L_i=2,
\qquad
L=L_1\oplus\cdots\oplus L_r.
\]
In particular, $\rank\Omega_P=2r$.
\item\label{part: generic q}
Fix a generic $P$ as above. For a generic
$q\in\ker\Omega_P$, there exist distinct non-zero scalars
$\rho_1,\ldots,\rho_r$ such that
\[
\Omega_Q=\sum_{i=1}^r\rho_i\omega_i,
\]
where $\Omega_Q$ is computed in Step \ref{mLP step: comp Omega_Q} of Algorithm \ref{alg: LP-decomposition}.
In particular,
\[
\rank\Omega_Q=2r
\qquad\text{and}\qquad
\im\Omega_Q=L.
\]

\item\label{part: Phi eigenspace}
Let $U\in\mathbb{R}^{2r\times D}$ have orthonormal rows spanning $L$,
and define
\[
\widetilde{\Omega}_P:=U\Omega_PU^\top,
\qquad
\widetilde{\Omega}_Q:=U\Omega_QU^\top,
\qquad
\Phi:=\widetilde{\Omega}_Q\widetilde{\Omega}_P^{-1}.
\]
Then the distinct eigenvalues of $\Phi$ are
$\rho_1,\ldots,\rho_r$, and the eigenspace associated with $\rho_i$ is
\[
E_i:=U(L_i).
\]
In particular, each eigenvalue $\rho_i$ has multiplicity $2$.

\end{enumerate}
\end{lemma}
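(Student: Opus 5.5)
The plan is to understand each summand $\omega_i(P)$ intrinsically on the $m$-dimensional space $\mathcal F_i$, and then reduce all three parts to statements about the subspaces $\mathcal F_i\cap\ker P$.

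\textbf{Step 1: the image of $\omega_i$.} By Equation~\eqref{eq: omega_i via M} we can write $\omega_i(P)=F_iW_iF_i^\top$. Here $W_i\in\mathbb R^{m\times m}$ is skew-symmetric with entries $\pm\det(M_i^{\widehat{j,k}})$, where $M_i=PF_i$. Equivalently, $W_i$ is the $2$-form
\[
(x,y)\longmapsto\det\begin{bmatrix}M_i\\ x^\top\\ y^\top\end{bmatrix}\quad\text{on }\mathbb R^m,
\]
up to a global sign that I will fix by a Laplace expansion along the last two rows. Consequences:
\begin{itemize}
\item The radical of this form is the row space of $M_i$.
\item If $\rank M_i=m-2$, then $W_i$ has rank $2$ and $\im W_i=\ker M_i$.
\item Since $F_i$ is injective (its columns lie in distinct summands), $L_i=\im\omega_i=F_i\ker(PF_i)=\mathcal F_i\cap\ker P$.
\end{itemize}
For generic $P$, $\rank(PF_i)=m-2$, because $F_i$ has full column rank. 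Hence $\dim L_i=2$.

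\textbf{Step 2: directness of the sum.} We have $L=\im\Omega_P\subseteq\sum_iL_i$. Directness gives $\dim\sum_i L_i=2r$. To get equality I will also check that the $\omega_i$ cannot cancel: if $\sum_i\omega_i x=0$ with the $L_i$ independent, then each $\omega_i x=0$. So $\ker\Omega_P=\bigcap_i\ker\omega_i$, which has codimension $2r$, and therefore $L=\bigoplus_i L_i$.

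The real content, and the main obstacle, is proving that $\mathcal F_1\cap K,\dots,\mathcal F_r\cap K$ are linearly independent for a generic subspace $K$ of codimension $d=m-2$. I would cut by one generic hyperplane at a time. Let $V_s^{(t)}:=\mathcal F_s\cap K_t$, where $K_t$ is the intersection of the first $t$ generic hyperplanes, so $\dim V_s^{(t)}=m-t$. I would maintain the invariant
\[
\dim\sum_{s\in S}V_s^{(t)}\;\ge\;2|S|+(m-t)-1\qquad\text{for all }|S|\ge2.
\]
At $t=0$ this is exactly \eqref{eq: m-LP F-form}. At $t=d$ it reads $\dim\sum_S L_s\ge 2|S|+1$, which is not yet $2|S|$. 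So the sharp statement I actually need is a different one: the sum of the generic cuts has dimension $\min\{2|S|,\ \cdot\}$. That means the invariant must be a min-over-partitions (Rado/matroid-union type) formula. For a generic hyperplane $H$ I expect
\[
\dim\sum_{s\in S}(V_s\cap H)=\min_{\pi}\sum_{B\in\pi}\Bigl(\dim\sum_{s\in B}V_s-1\Bigr),
\]
with singleton blocks contributing $\dim V_s-1$. I would prove this by a generic-point/semicontinuity argument: choose $H$ transversal to every partial sum $\sum_{s\in B}V_s$. The LP inequality with its slack $m-1$ then guarantees that at every stage the minimizing partition is the one into singletons. Iterating $d$ times, the singleton partition yields $\sum_s(m-d)=2|S|$. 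I expect this step to require the most care.

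\textbf{Step 3: Part~\ref{part: generic q}.} Since $\Omega_P$ is skew-symmetric, $\ker\Omega_P=L^\perp$. So $q$ vanishes on every $L_i$, and so do $s_1,\dots,s_{d-1}$, because $L_i\subseteq\ker P$. Hence $L_i\subseteq\mathcal F_i\cap\ker Q$. By Step~1 applied to $Q$, either $\omega_i(Q)=0$ or $\im\omega_i(Q)$ is $2$-dimensional and therefore equals $L_i$. Two rank-$\le 2$ skew forms with the same $2$-dimensional image are proportional, so $\omega_i(Q)=\rho_i\omega_i(P)$. Concretely, $\rho_i$ is the ratio $q/p$ on the $3$-dimensional space $W_i:=\mathcal F_i\cap\ker(s_1,\dots,s_{d-1})$: both $q$ and $p$ vanish on $L_i\subset W_i$, so they are proportional there.

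Thus each $\rho_i$ is a linear functional of $q\in L^\perp$. To show genericity it suffices to exhibit, for each $i$ and each pair $i\neq j$, one $q\in L^\perp$ with $\rho_i(q)\neq0$, respectively $\rho_i(q)\neq\rho_j(q)$.
\begin{itemize}
\item Nonvanishing: take $q=p$, which gives $\rho_i=1$.
\item Distinctness: I need $q\in L^\perp$ with $q|_{W_i}=p|_{W_i}$ but $q|_{W_j}$ not equal to $p|_{W_j}$. This holds provided $W_i+W_j+L$ is strictly larger than $W_i+L$ in the right sense, that is, $W_j\not\subseteq W_i+L$. I would deduce this from the Step~2 dimension count applied to $S=\{i,j\}$ with one fewer cut, using the LP slack.
\end{itemize}
Then $\Omega_Q=\sum_i\rho_i\omega_i$, $\rank\Omega_Q=2r$ and $\im\Omega_Q=L$ follow from Part~\ref{part: generic p}.

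\textbf{Step 4: Part~\ref{part: Phi eigenspace}.} This is routine. Set $\tilde\omega_i=U\omega_iU^\top$. This is invertible on $E_i=U(L_i)$ and zero on $E_j$ for $j\neq i$, and $\mathbb R^{2r}=\bigoplus_i E_i$. The matrix $\widetilde\Omega_P=\sum_i\tilde\omega_i$ is therefore block-diagonal and invertible, and $\widetilde\Omega_Q=\sum_i\rho_i\tilde\omega_i$. For $v\in E_i$ write $v=\tilde\omega_i u$. Then $\widetilde\Omega_P^{-1}v=u_i$ with $u_i\in E_i$, and $\Phi v=\rho_i\tilde\omega_iu_i=\rho_iv$. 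So $E_i$ is the $\rho_i$-eigenspace, of dimension $2$, and the $\rho_i$ are distinct by Step~3.
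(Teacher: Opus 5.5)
\textbf{Verdict: genuine gaps, in Step 2 and in the distinctness part of Step 3.} Your Steps 1 and 4 are correct. So is your identification of $\rho_i$ as the ratio of $q$ to $p$ on the $3$-dimensional space $W_i=\mathcal F_i\cap\ker(s_1,\dots,s_{d-1})$; this is the paper's $\rho_i=q(v_i)/p(v_i)$.

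\textbf{Step 2.} You call this the real content, but you never establish it. The min-over-partitions formula you expect is Lov\'asz's Dilworth-truncation theorem, and the direction you need is the lower bound. Choosing $H$ transversal to every partial sum only proves the upper bound, because $\sum_{s\in S}(V_s\cap H)\subseteq\sum_{B\in\pi}\bigl((\sum_{s\in B}V_s)\cap H\bigr)$ for every partition $\pi$. The lower bound needs a citation or an explicit witness hyperplane, and semicontinuity alone does not supply one.

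Your claim that the singleton partition is the minimizer at every stage is also false. Take $m=5$ and $\dim(\mathcal F_i+\mathcal F_j)=8$, which \eqref{eq: m-LP} allows (two modes with parallel columns). One generic cut then gives a $7$-dimensional sum, not $8$.

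What does work is to propagate the invariant you wrote down. A singleton block contributes $2|B|+(m-t-3)$, and a larger block contributes at least $2|B|+(m-t-2)$. So the invariant survives the step $t\to t+1$ whenever $m-t\ge4$. Only at the last cut are singletons forced, which gives $2|S|$.

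The paper avoids the truncation theorem altogether. It applies Rado's theorem with quotas on the product of Grassmannians $\mathcal X$, showing that each bad condition cuts out a proper subvariety. It then shows every good tuple is realized as $(\mathcal F_i\cap\ker P_0)_i$ with $P_0=QB$, so genericity transfers to $P$.

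\textbf{Step 3, distinctness.} Your reduction to $W_j\not\subseteq W_i+L$ is correct, but your justification does not give it. The count for $S=\{i,j\}$ with one fewer cut only yields $W_i\cap W_j=0$. That does not rule out $v_j\in\langle v_i\rangle+L$ through the other components, e.g.\ $v_j-v_i\in L_k$ for a third index $k$.

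What you need is a global statement: $W_i\oplus W_j\oplus\bigoplus_{k\neq i,j}L_k$ is direct, equivalently $\rank[\,L\;v_i\;v_j\,]=2r+2$. The paper gets this from its third goodness condition $\dim(\mathcal F_i+\mathcal F_j+L)\ge 2r+m-1$. It proves that condition by Rado with quotas $m,3,2,\dots,2$ and carries it through a generic choice of $s_1,\dots,s_{d-1}$ (the $S=TB$ step).

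Within your framework, you could get it by applying the truncation lower bound at the final cut $\ker p$ to the family $\{W_i+W_j\}\cup\{W_k\}_{k\neq i,j}$, using the stage-$(d-1)$ invariant for all $S\supseteq\{i,j\}$:
\begin{itemize}
\item blocks containing $W_i+W_j$ contribute at least $2|K|+5$, where $K$ is the set of other indices in the block;
\item hence $\dim\bigl((W_i+W_j)\cap\ker p+\sum_{k\neq i,j}L_k\bigr)\ge 2r+1$;
\item since $W_i\not\subseteq\ker p$, adding $W_i+W_j$ gives $2r+2$.
\end{itemize}
That is an additional argument you have not made, and it rests on the same unproved lower bound.
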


\begin{proof} Let $d:=m-2$. Recall that $\mathcal{F}_i:=\im F_i$ and $M_i:=PF_i\in\mathbb{R}^{d\times m}$, where
\[
F_i:=\begin{bmatrix}
    \bar a_i^{(1)}\vert\dots\vert\bar a_i^{(m)}
\end{bmatrix}.
\] 
\begin{enumerate}

\item By Lemma \ref{lem: generic-P-good}, for generic $P\in\mathbb{R}^{d\times D}$ the collection
\[
\mathcal{F}_i\cap\ker P,\quad  i\in[r]
\]
is good (per Definition \ref{def: good-collection}). In particular,
\begin{equation}
    \label{eq: dim F_i cap ker p=2}
    \dim \mathcal{F}_i\cap\ker P=2,\quad\forall i\in[r]
\end{equation}
and
\begin{equation}
    \label{eq: (F_i cap ker p) are direct}
    \bigoplus_{i\in[r]}\left( \mathcal{F}_i\cap\ker P\right).
\end{equation}

Since 
\[
\dim \mathcal{F}_i\cap\ker P=2\iff\rank M_i=d
\]
Lemma \ref{lem: image omega intersection}, together with Equation \eqref{eq: dim F_i cap ker p=2} implies that
\[
L_i:=\im\omega_i(P)=\mathcal{F}_i\cap\ker P
\]
and
\[
\dim L_i=2
\]
for all $i\in[r]$.
Equation \eqref{eq: (F_i cap ker p) are direct} implies that $L_1\oplus\dots\oplus L_r$ is direct. By definition,
$L:=\im\Omega_P\subseteq \bigoplus_{i\in[r]} L_i$.
To show the converse, we prove that 
\[
\ker\Omega_P\underset{\text{skew-symmetry}}{=}(\im\Omega_P)^\perp\subseteq (\bigoplus_{i\in[r]} L_i)^\perp=\bigcap_{i\in[r]}L_i^\perp.\]

Indeed, let $x\in\ker\Omega_P$, then
\[
\Omega_Px=\sum_{i\in[r]}\omega_ix=0.
\]
Since $\bigoplus_{i\in[r]}\im\omega_i$ is direct, the last equality implies that $x\in\ker\omega_i$ for every $i$. Since $\omega_i$ is skew-symmetric, $\ker\omega_i=(\im\omega_i)^\perp=L_i^\perp$. Therefore,
\[
x\in \bigcap_{i\in[r]}L_i^\perp.
\]
We conclude that
\[
L=L_1\oplus\dots\oplus L_r.
\]
In particular, $\rank\Omega_P=\dim L=2r$.

\item We established in Part 1 that $\rank PF_i=d$ for all $i\in[r]$. The same holds for $QF_i$: Choose some 
non-zero $(m-2)$-minor $PF_i[:, J]$, then the algebraic set
\[
\mathcal{V}_i:=\{s\in L^\perp |\det\left( \begin{bmatrix}
    s\\
    s_1\\
    \vdots\\
    s_{m-3}
\end{bmatrix}F_i[:, J]\right)=0 \}
\]
is proper since $p\in L^\perp\setminus\mathcal V_i$. Thus, a generic $q\in L^\perp$ lies in none of these varieties, so $\rank QF_i=d$ for all $i$ (or equivalently, $\dim(\mathcal{F}_i\cap\ker Q)=2$). By construction, $L_i\subseteq \ker Q$ for all $i$, so 
\[
\ker(QF_i)=\ker(PF_i).
\]

By Lemma \ref{lemma: omega_i(P)=rho omega_i(Q)},
there exist non-zero $\rho_1,\dots,\rho_r$ s.t.
$$
\omega_i(Q)=\rho_i\omega_i(P),\qquad\forall i\in[r]
$$
$$
\implies
\Omega_Q=\sum_{i=1}^r\omega_i(Q)=\sum_{i=1}^r\rho_i\omega_i(P).
$$

It remains to show that for generic \(q\in L^\perp\), the scalars \(\rho_1,\ldots,\rho_r\) are pairwise distinct.

Let
\[
B\in\mathbb R^{(D-2r)\times D}
\]
be a full-row-rank matrix whose rows form a basis of $L^\perp$.
Thus
\[
\ker B=L.
\]
Since
\[
\mathcal F_i\cap L=L_i,
\]
we have
\[
\rank(BF_i)
=
m-\dim L_i
=
m-2
=
d.
\]
Moreover, by Equation \eqref{eq:vi-vj-independent-mod-L}  (in the definition of good collections), for every $i\neq j$,
\[
\rank
\begin{bmatrix}
BF_i & BF_j
\end{bmatrix}=
\dim(\mathcal{F}_i+\mathcal{F}_j)-\dim\left( \ker B\cap (\mathcal{F}_i+\mathcal{F}_j)\right)=
\]
\[
\dim(\mathcal{F}_i+\mathcal{F}_j)-\dim\left( L\cap (\mathcal{F}_i+\mathcal{F}_j)\right)
\underset{\text{rank-nullity}}{=}
\dim(\mathcal F_i+\mathcal F_j+L)-\dim L
\geq d+1.
\]
Since both $BF_i$ and $BF_j$ have rank $d$, it follows that
\begin{equation}
\label{eq:intersection-BFi-BFj}
\dim\left(
\operatorname{Im}(BF_i)
\cap
\operatorname{Im}(BF_j)
\right)\underset{\text{rank-nullity}}{=}
2d-\rank\begin{bmatrix}
    BF_i& BF_j
\end{bmatrix}
\leq d-1.
\end{equation}

Choose a generic matrix
\[
T\in\mathbb R^{(d-1)\times(D-2r)}
\]
and set
\[
S:=TB\in\mathbb{R}^{(d-1)\times D}.
\]
For generic $T$,
\[
\rank(SF_i)
=
\rank(TBF_i)
=
d-1
\]
for every $i$. Therefore
\[
\dim(\mathcal F_i\cap\ker S)
=
m-(d-1)
=
3.
\]
Since $L_i\subseteq L\subseteq\ker S$, we may choose
\[
v_i\in\mathcal F_i
\]
such that
\begin{equation}
\label{eq:kernel-S-Fi}
\mathcal F_i\cap\ker S
=
L_i\oplus\langle v_i\rangle.
\end{equation}

We claim that, for generic $T$, the vectors $Bv_i$ and $Bv_j$ are
linearly independent whenever $i\neq j$. Indeed,
\[
Bv_i
\in
\ker T\cap\operatorname{Im}(BF_i),
\]
and since $\rank(TBF_i)=d-1$, and 
\[
\ker T\cap\operatorname{Im}(BF_i)\underset{rank-nullity}{=}\rank(BF_i)-\rank(TBF_i)
\]
this intersection is one-dimensional.
Thus
\[
\ker T\cap\operatorname{Im}(BF_i)
=
\langle Bv_i\rangle.
\]
By \eqref{eq:intersection-BFi-BFj}, the subspace
\[
\operatorname{Im}(BF_i)\cap\operatorname{Im}(BF_j)
\]
has dimension at most $d-1$. Hence, for generic $T$,
\[
\ker T
\cap
\operatorname{Im}(BF_i)
\cap
\operatorname{Im}(BF_j)
=
\{0\}.
\]
Consequently,
\[
\rank
\begin{bmatrix}
Bv_i & Bv_j
\end{bmatrix}
=2
\qquad
\text{for every }i\neq j.
\]
Equivalently,
\begin{equation}
\label{eq:vi-vj-independent-mod-L}
\rank
\begin{bmatrix}
L & v_i & v_j
\end{bmatrix}
=
2r+2,
\end{equation}
where, by abuse of notation, $L$ also denotes any basis matrix for
the subspace $L$.

Now choose generic row vectors
\[
p,q\in L^\perp
\]
and define
\[
P=
\begin{bmatrix}
p\\ S
\end{bmatrix},
\qquad
Q=
\begin{bmatrix}
q\\ S
\end{bmatrix}.
\]
Since
\[
\mathcal F_i\cap\ker S
=
L_i\oplus\langle v_i\rangle,
\]
genericity of $p$ and $q$ gives
\[
p(v_i)\neq0,
\qquad
q(v_i)\neq0
\]
for every $i$. Hence
\[
\rank(PF_i)=\rank(QF_i)=d,
\]
and, since both $P$ and $Q$ vanish on $L_i$,
\[
\ker(P|_{\mathcal F_i})
=
\ker(Q|_{\mathcal F_i})
=
L_i.
\]
It follows that there is a unique
\[
R_i\in\operatorname{GL}_d(\mathbb R)
\]
such that
\[
P|_{\mathcal F_i}
=
R_iQ|_{\mathcal F_i}.
\]

We next express $\det(R_i)$ in terms of $v_i$. Choose a
$(d-1)$-dimensional subspace $I_i\subseteq\mathcal F_i$ such that
\[
\mathcal F_i
=
L_i\oplus\langle v_i\rangle\oplus I_i,
\]
and let, by abuse of notation,
\[
I_i\in\mathbb R^{D\times(d-1)}
\]
be a basis matrix for this subspace. By
\eqref{eq:kernel-S-Fi}, the matrix
\[
SI_i
\]
is invertible.

Set
\[
C_i:=
\begin{bmatrix}
v_i & I_i
\end{bmatrix}
\in\mathbb R^{D\times d}.
\]
Since the columns of $C_i$ lie in $\mathcal F_i$,
\[
PC_i=R_iQC_i.
\]

Taking determinants gives
\[
\det(R_i)
=
\frac{\det(PC_i)}{\det(QC_i)}.
\]
Moreover,
\[
PC_i
=
\begin{bmatrix}
p(v_i) & pI_i\\
0 & SI_i
\end{bmatrix},
\qquad
QC_i
=
\begin{bmatrix}
q(v_i) & qI_i\\
0 & SI_i
\end{bmatrix}.
\]
Therefore
\[
\det(PC_i)
=p(v_i)\det(SI_i)
\]
and
\[
\det(QC_i)
=q(v_i)\det(SI_i).
\]
Therefore
\begin{equation}
\label{eq:rho-evaluation-ratio}
\rho_i
=
\det(R_i)^{-1}
=
\frac{q(v_i)}{p(v_i)}.
\end{equation}

Rescale each $v_i$ so that
\[
p(v_i)=1.
\]
Then
\begin{equation}
\label{eq:rho-qvi}
\rho_i=q(v_i).
\end{equation}

By \eqref{eq:vi-vj-independent-mod-L},
\[
v_i-v_j\notin L
\qquad
\text{for every }i\neq j.
\]
Hence
\[
q(v_i)=q(v_j)
\]
is equivalent to
\[
q(v_i-v_j)=0,
\]
which is a proper linear condition on $q\in L^\perp$. Likewise,
\[
q(v_i)=0
\]
is a proper linear condition on $q\in L^\perp$. Since there are only
finitely many such conditions, a generic $q\in L^\perp$ satisfies
\[
q(v_i)\neq0
\qquad\text{and}\qquad
q(v_i)\neq q(v_j)
\quad
\text{for all }i\neq j.
\]
By \eqref{eq:rho-qvi}, the scalars
\[
\rho_1,\ldots,\rho_r
\]
are therefore nonzero and pairwise distinct.

\item
Since
\[
\ker\Omega_P=L^\perp
\qquad\text{and}\qquad
\im\Omega_P=L,
\]
the restriction
\[
\Omega_P|_L:L\to L
\]
is bijective, so $\widetilde{\Omega}_P=U\Omega_PU^\top$
is invertible.

Since $U|_L:L\to\mathbb{R}^{2r}$ is an isomorphism and
\[
L=L_1\oplus\cdots\oplus L_r,
\]
we have
\[
\mathbb{R}^{2r}
=
E_1\oplus\cdots\oplus E_r,
\qquad
E_i:=U(L_i),
\]
with $\dim E_i=2$. 

We show that for every $i$, $\rho_i$ is an eigenvalue of $\Phi$ associated with eigenspace $E_i$. Fix $i\in[r]$ and let $z\in E_i$. Write
\[
z=Uy,
\qquad
y\in L_i.
\]
Since $\Omega_P|_L$ is onto $L$, there exists $x\in L$ such that $\Omega_Px=y$.

\[
\implies 
y=\Omega_Px=
\sum_{j=1}^r\omega_jx.
\]
Since $\omega_jx\in L_j$, and $y\in L_i$, directness implies that 
\begin{equation}
\label{eq: omegaj x =0}
\omega_i x=y,
\qquad
\omega_jx=0
\quad
\text{for }j\neq i.
\end{equation}
Additionally,  
$\widetilde{\Omega}_P Ux=
U\Omega_PU^\top Ux=U\Omega_Px=Uy=z$,
implies that $\widetilde{\Omega}_P^{-1}z=Ux$.
\[
\begin{aligned}
\implies
\Phi z
&=
\widetilde{\Omega}_Q
\widetilde{\Omega}_P^{-1}z\\
&=
\widetilde{\Omega}_Q Ux\\
&=
U\Omega_Qx\\
&=
U\sum_{j=1}^r\rho_j\omega_jx\\
&\underset{\eqref{eq: omegaj x =0}}{=}
\rho_iU\omega_ix\\
&=
\rho_iUy\\
&=
\rho_i z.
\end{aligned}
\]
We conclude that $\Phi|_{E_i}=\rho_i I$. Since
\[
\mathbb{R}^{2r}
=
E_1\oplus\cdots\oplus E_r
\]
and $\rho_i$ are pairwise distinct, $E_i$ is exactly the
eigenspace corresponding to $\rho_i$. Since $\dim E_i=2$, each
$\rho_i$ has multiplicity $2$.

\end{enumerate}
\end{proof}

\subsection{Generic P is good}

Recall that $F_i:=[\bar a_i^{(1)}\vert\dots\vert \bar a_i^{(m)}]\in\mathbb{R}^{D\times m}$. The main goal of this section is to prove that a generic $P\in\mathbb{R}^{(m-2)\times D}$ is good w.r.t the spaces $\mathcal{F}_i:=\im F_i$, meaning that the spaces
\[
L_i:=\mathcal{F}_i\cap\ker P
\]
satisfy the following definition.

\begin{definition}[Good collection]
\label{def: good-collection}
Let
\[
F_i\in\mathbb R^{D\times m},
\qquad
\mathcal F_i:=\operatorname{Im}F_i,
\]
and assume that $\dim\mathcal F_i=m$ for every $i\in[r]$. Define
\[
\mathcal X
:=
\left\{
(L_1,\ldots,L_r):
L_i\subseteq\mathcal F_i,\ \dim L_i=2
\text{ for every }i
\right\}.
\]
For $\mathbf L=(L_1,\ldots,L_r)\in\mathcal X$, write
\[
L:=\sum_{i=1}^r L_i.
\]

We call $\mathbf L$ \emph{good} if
\begin{align}
\dim L &= 2r, \label{eq:good-L-direct}\\
\dim(\mathcal F_i+L) &= 2r+m-2
\qquad &&\text{for every }i\in[r], \label{eq:good-L-Fi}\\
\dim(\mathcal F_i+\mathcal F_j+L)
&\geq 2r+m-1
\qquad &&\text{for every }i\neq j. \label{eq:good-L-pair}
\end{align}
\end{definition}

Thus, the goal of this section is to prove the following.

\begin{lemma}[Generic contractions induce a good collection]
\label{lem: generic-P-good}
Set $d:=m-2$. For generic
\[
P\in\mathbb R^{d\times D},
\]
the collection
\[
L_i(P):=\mathcal F_i\cap\ker P,
\qquad i\in[r],
\]
is good (by Definition \ref{def: good-collection}).
\end{lemma}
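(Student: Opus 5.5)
The plan is to prove goodness by combining an openness argument with a single explicit witness. First I would show that the set of $P\in\mathbb R^{d\times D}$ for which $(L_1(P),\dots,L_r(P))$ is good is Zariski open. Then I would exhibit one $P$ in it. Since $\mathbb R^{d\times D}$ is irreducible, a nonempty Zariski open set is dense and its complement has measure zero, which is the genericity claim.

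\textbf{Openness.} Let $U_0$ be the set of $P$ with $\rank(PF_i)=d$ for all $i$. This set is open, and on it $\dim L_i(P)=m-d=2$. On $U_0$ a basis of $\ker(PF_i)$, and hence of $L_i(P)=F_i\ker(PF_i)$, can be written locally by Cramer's rule as polynomial functions of $P$; equivalently, its Plücker coordinates are the signed $d$-minors of $PF_i$. Consequently each quantity
\[
\dim\Bigl(\sum_i L_i(P)\Bigr),\qquad \dim\bigl(\mathcal F_i+L(P)\bigr),\qquad \dim\bigl(\mathcal F_i+\mathcal F_j+L(P)\bigr)
\]
is the rank of a matrix whose entries are algebraic in $P$. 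Rank is lower semicontinuous, so each of these lower bounds defines an open condition. The upper bounds are automatic: $\dim L\le 2r$, and since $\mathcal F_i\cap L\supseteq L_i$ we get $\dim(\mathcal F_i+L)\le m+2r-2$. Thus goodness is equivalent to finitely many lower bounds, each open.

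\textbf{A witness.} I would first build a good $\mathbf L\in\mathcal X$ by choosing each $L_i\subseteq\mathcal F_i$ as a generic $2$-dimensional subspace. The tool is the Rado/matroid-union formula for generic subspaces: for a fixed subspace $V$ and generic $2$-planes $L_s\subseteq\mathcal F_s$, $s\in R$,
\[
\dim\Bigl(V+\sum_{s\in R}L_s\Bigr)=\min_{S\subseteq R}\Bigl(\dim\bigl(V+\textstyle\sum_{s\in S}\mathcal F_s\bigr)+2|R\setminus S|\Bigr).
\]
I would either cite this or prove it by induction on $|R|$, adding one generic plane at a time. Feeding Condition~\eqref{eq: m-LP} in its form \eqref{eq: m-LP F-form} into this formula gives the three conditions:
\begin{itemize}
\item For \eqref{eq:good-L-direct}, take $V=0$ and $R=[r]$. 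Every term is at least $2|S|+2(r-|S|)=2r$ (using $\dim\mathcal F_s=m\ge2$ when $|S|=1$), so $\dim L=2r$.
\item For \eqref{eq:good-L-Fi}, take $V=\mathcal F_i$ and $R=[r]\setminus\{i\}$; since $L_i\subseteq\mathcal F_i$, it can be dropped. The term $S=\emptyset$ gives $m+2r-2$. Any $S\neq\emptyset$ gives at least $2|S|+2+m-1+2(r-1-|S|)=2r+m-1$. Hence $\dim(\mathcal F_i+L)=2r+m-2$.
\item For \eqref{eq:good-L-pair}, take $V=\mathcal F_i+\mathcal F_j$ and $R=[r]\setminus\{i,j\}$. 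Every term is at least $2(|S|+2)+m-1+2(r-2-|S|)=2r+m-1$.
\end{itemize}

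Next I would realize this $\mathbf L$ as $L_i(P)$ for some $P$. Take $P=\Psi\circ\pi$, where $\pi:X\to X/L$ is the quotient map and $\Psi:X/L\to\mathbb R^d$ is generic; this is possible because $\dim X/L=D-2r\ge m-1\ge d$. Condition \eqref{eq:good-L-Fi} says $\mathcal F_i\cap L=L_i$, so $\pi(\mathcal F_i)$ has dimension exactly $m-2=d$. A generic $\Psi$ is injective on each of these finitely many $d$-dimensional subspaces. Then $\ker P\cap\mathcal F_i=\mathcal F_i\cap L=L_i$, so $L_i(P)=L_i$, and this $P$ lies in $U_0$ and in the open set of good $P$.

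The main obstacle is the generic-subspace rank formula: it is a matroid-union/Rado statement over $\mathbb R$ with a fixed ambient $V$. If citing it is not acceptable, I would prove the needed inequality directly by induction. When a new generic plane $L_s\subseteq\mathcal F_s$ is added to $W=V+\sum_{t\in R'}L_t$, its contribution is $\min\bigl(2,\dim((\mathcal F_s+W)/W)\bigr)$. The tight-set bookkeeping, merging minimizing sets $S$ under union, is the delicate part.
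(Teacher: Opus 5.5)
Your proof is correct, and its skeleton matches the paper's. In both, goodness is a Zariski-open condition on $P$: the paper gets this by pulling back the bad locus of $\mathcal X$ along $P\mapsto(\mathcal F_i\cap\ker P)_i$, while you get it directly from locally polynomial kernel bases. Both then exhibit a good collection $\mathbf L$ and realize it by composing the quotient $X\to X/L$ with a generic map to $\mathbb R^d$, which is exactly the paper's $P_0=QB$.

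The real difference is how the good collection is produced. The paper (Lemma~\ref{lem: generic-good-collection}) uses only the plain quota form of Rado's theorem, Corollary~\ref{cor: rado-quotas}. It takes quotas $(2,\dots,2)$, $(m,2,\dots,2)$ and $(m,3,2,\dots,2)$, chosen so that each of the three conditions reduces to the existence of a direct sum. It then intersects the three nonempty open sets using irreducibility of $\mathcal X$. You instead use the deficiency (matroid-union) form of Rado relative to a fixed subspace $V$. This computes the generic dimension exactly and treats all three conditions uniformly. The price is that the deficiency form is genuinely needed for \eqref{eq:good-L-pair}: with $V=\mathcal F_i+\mathcal F_j$, the minimum can be attained at a nonempty $S$. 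The paper's quota $3$ on $\mathcal F_j$, rather than all of $\mathcal F_j$, is what avoids this.

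Citing your formula is legitimate. The induction you were worried about also closes without any tight-set merging. For generic $L_s\subseteq\mathcal F_s$ one has $\dim(W+L_s)=\min(\dim W+2,\dim(W+\mathcal F_s))$. Apply the inductive hypothesis once with $V$ and once with $V+\mathcal F_s$, for the same generic $(L_t)_{t\in R'}$. The first gives the minimum over $S$ with $s\notin S$, the second the minimum over $S$ with $s\in S$, so together they give the full minimum over $S\subseteq R$.
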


The foundation of Lemma \ref{lem: generic-P-good} is a theorem by Rado (1942).

\begin{theorem}[Rado's theorem {\cite{Rado1942}}]
\label{thm:rado}
Let $E_1,\dots,E_m$ be subspaces of a vector space $E$. There exist vectors
\[
e_i\in E_i,\qquad i\in[m],
\]
that are linearly independent if and only if, for every $S\subseteq[m]$,
\[
\dim\left(\sum_{i\in S}E_i\right)\ge |S|.
\]
\end{theorem}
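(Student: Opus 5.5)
The forward implication is immediate. The converse I would prove by strong induction on $m$, following the classical Halmos--Vaughan proof of Hall's marriage theorem, with quotient spaces playing the role of deleting matched vertices. An alternative route is to view $\det$ of the matrix $[e_1|\cdots|e_m]$, restricted to an $m$-dimensional coordinate projection, as a polynomial in the coordinates of $e_i\in E_i$. That would show a generic choice works whenever some choice works, which is the form in which the theorem will be used for Lemma~\ref{lem: generic-P-good}. Existence itself, however, is best handled combinatorially.

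\emph{Necessity.} If $e_i\in E_i$ are linearly independent, then for every $S\subseteq[m]$ the $|S|$ independent vectors $\{e_i\}_{i\in S}$ lie in $\sum_{i\in S}E_i$. Hence $\dim\sum_{i\in S}E_i\ge|S|$.

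\emph{Sufficiency.} Assume the Hall-type condition holds. The base case $m=1$ is trivial: $\dim E_1\ge1$, so pick any nonzero vector. For the inductive step, split into two cases according to whether a \emph{critical} set exists, meaning a nonempty proper $S\subsetneq[m]$ with $\dim\sum_{i\in S}E_i=|S|$.

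\emph{Case 1 (no critical set).} Every nonempty proper $S$ has $\dim\sum_{i\in S}E_i\ge|S|+1$. Pick any nonzero $e_m\in E_m$, and let $\pi:E\to E/\langle e_m\rangle$ be the quotient map. For $S\subseteq[m-1]$ nonempty,
\[
\dim\sum_{i\in S}\pi(E_i)\ \ge\ \dim\sum_{i\in S}E_i-1\ \ge\ |S|.
\]
So the induction hypothesis gives independent vectors $\pi(e_i)\in\pi(E_i)$ for $i\in[m-1]$. Lift each to some $e_i\in E_i$. If $\sum_i c_ie_i=0$, then applying $\pi$ forces $c_i=0$ for $i<m$, and then $c_m=0$. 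Hence $e_1,\dots,e_m$ are independent.

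\emph{Case 2 (a critical set $S$ exists).} Let $W:=\sum_{i\in S}E_i$, so $\dim W=|S|$. The family $\{E_i\}_{i\in S}$ inside $W$ satisfies the condition, since every subfamily inherits it. Because $|S|<m$, induction yields independent $e_i\in E_i$ for $i\in S$, and these form a basis of $W$.

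For the complement $S^c$, pass to $\pi:E\to E/W$. The key verification, which I expect to be the only real point of the proof, is that the images still satisfy the condition. For $T\subseteq S^c$,
\[
\dim\sum_{i\in T}\pi(E_i)
=\dim\Bigl(\sum_{i\in T\cup S}E_i\Bigr)-\dim W
\ \ge\ |T|+|S|-|S|=|T|.
\]
Since $|S^c|<m$, induction gives independent $\pi(e_i)$ for $i\in S^c$; lift them to $e_i\in E_i$.

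Finally, the whole family is independent. A relation $\sum_{i\in[m]}c_ie_i=0$, after applying $\pi$, kills the coefficients $c_i$ for $i\in S^c$. The remaining relation lies in $W$, where $\{e_i\}_{i\in S}$ is a basis, so the coefficients on $S$ vanish as well.
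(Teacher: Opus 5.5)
Your proof is correct. The paper does not prove this statement; it only cites Rado (1942), whose original theorem is stated for abstract independence relations, with linear independence as a special case. Your proposal therefore supplies a self-contained proof where the paper has none.

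Your argument is the Halmos--Vaughan critical-set induction for Hall's theorem, with quotienting by $\langle e_m\rangle$ or by $W$ playing the role of matroid contraction. The steps all check out:
\begin{itemize}
\item In Case~1, every nonempty $S\subseteq[m-1]$ is a proper subset of $[m]$, so its surplus of at least one dimension absorbs the dimension lost in the quotient.
\item In Case~2, the identity $\dim\pi(\sum_{i\in T}E_i)=\dim\sum_{i\in T\cup S}E_i-|S|$ uses $T\cap S=\emptyset$ and $E_i\subseteq W$ for $i\in S$.
\item Because $S$ is nonempty and proper, both recursive calls involve strictly fewer than $m$ subspaces, so strong induction (over all ambient spaces) applies.
\end{itemize}

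Relative to the bare citation, your version makes the paper self-contained and shows the argument works over any field. It also combines directly with the duplication trick the paper uses to obtain Corollary~\ref{cor: rado-quotas}.

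The determinantal aside is not needed for the statement. In the paper, the step from ``some choice works'' to ``a generic choice works'' is done separately, via irreducibility of $\mathcal X$ in Lemma~\ref{lem: generic-good-collection}, which then feeds Lemma~\ref{lem: generic-P-good}. Keeping existence purely combinatorial, as you do, matches how the theorem is actually used.
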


We will use the following generalization of Rado's theorem, which allows us to choose several independent vectors from each subspace.

\begin{corollary}
\label{cor: rado-quotas}
Let $E_1,\dots,E_m$ be subspaces of a vector space $E$, and let
$k_1,\dots,k_m$ be nonnegative integers. Suppose that for every
$S\subseteq[m]$,
\[
\dim\left(\sum_{i\in S}E_i\right)
\ge
\sum_{i\in S}k_i.
\]
Then there exist subspaces
\[
P_i\subseteq E_i,
\qquad
\dim P_i=k_i,
\]
such that
\[
P_1\oplus\cdots\oplus P_m
\]
is a direct sum.
\end{corollary}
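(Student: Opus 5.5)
The plan is to derive Corollary~\ref{cor: rado-quotas} from Rado's theorem (Theorem~\ref{thm:rado}) by replicating each subspace according to its quota. Consider the multiset of subspaces in which $E_i$ appears $k_i$ times. Index these copies by pairs $(i,\ell)$ with $i\in[m]$ and $\ell\in[k_i]$, and set $E_{(i,\ell)}:=E_i$. This gives a family of $K:=\sum_i k_i$ subspaces of $E$. If Rado's condition holds for this enlarged family, Rado's theorem yields vectors $e_{(i,\ell)}\in E_i$ that are linearly independent. We then define $P_i:=\Span\{e_{(i,1)},\dots,e_{(i,k_i)}\}$.

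\textbf{Verifying Rado's condition.} Let $S'$ be any subset of the index pairs, and let $S:=\{i:(i,\ell)\in S' \text{ for some }\ell\}$ be its projection to $[m]$. The sum $\sum_{(i,\ell)\in S'}E_{(i,\ell)}$ equals $\sum_{i\in S}E_i$, because repeated copies contribute nothing new. The hypothesis gives $\dim\sum_{i\in S}E_i\ge\sum_{i\in S}k_i$. Moreover $|S'|\le\sum_{i\in S}k_i$, since $S'$ contains at most $k_i$ pairs with first coordinate $i$. Hence $\dim\sum_{(i,\ell)\in S'}E_{(i,\ell)}\ge|S'|$, which is exactly Rado's condition. (If some $k_i=0$, index $i$ contributes no copies, and we set $P_i=0$.)

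\textbf{Concluding.} The $K$ vectors $e_{(i,\ell)}$ are linearly independent. It follows that each $P_i$ has dimension exactly $k_i$, since its spanning set is a subset of an independent family. It also follows that $\sum_i\dim P_i=K=\dim\sum_i P_i$, so the sum $P_1\oplus\cdots\oplus P_m$ is direct. Nothing here is delicate. The only point needing care is the reduction from a subset $S'$ of copies to its projection $S$, namely the inequality $|S'|\le\sum_{i\in S}k_i$; everything else is immediate from Rado's theorem.
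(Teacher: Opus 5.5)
Your proposal is correct and follows the paper's proof: the paper likewise replaces each $E_i$ by $k_i$ identical copies and applies Rado's theorem. You have also written out the routine checks the paper leaves implicit, namely the projection bound $|S'|\le\sum_{i\in S}k_i$ and the dimension count showing the sum is direct.
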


\begin{proof}
For every $i\in[m]$, replace $E_i$ by $k_i$ identical copies of $E_i$,
and apply Theorem~\ref{thm:rado}.
\end{proof}

\begin{lemma}[Generic collections are good]
\label{lem: generic-good-collection}
Suppose that, for every $S\subseteq[r]$ with $|S|\geq2$,
\[
\dim\left(\sum_{i\in S}\mathcal F_i\right)
\geq 2|S|+m-1.
\]
Then the set of good collections is open and dense in $\mathcal X$.
Equivalently, the set of collections in $\mathcal X$ that are not good
is a proper algebraic subset of $\mathcal X$.
\end{lemma}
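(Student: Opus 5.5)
The plan is to prove the lemma by a standard ``open + nonempty in an irreducible variety'' argument. Concretely, $\mathcal X\cong\prod_{i=1}^r\Gr(2,\mathcal F_i)$ is a product of Grassmannians, hence an irreducible real algebraic variety. Each of the three defining conditions of a good collection is a lower bound on the dimension of a sum of subspaces that depend algebraically on $\mathbf L$. Such a bound says that some minor of a matrix, whose columns are Plücker/Stiefel coordinates of the $L_i$ together with fixed bases of the $\mathcal F_i$, does not vanish. So each condition defines a Zariski-open subset of $\mathcal X$.

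First I will reduce \eqref{eq:good-L-Fi} to an inequality. Since $L_i\subseteq\mathcal F_i$, we have $\mathcal F_i+L=\mathcal F_i+\sum_{k\neq i}L_k$, so $\dim(\mathcal F_i+L)\le m+2(r-1)=2r+m-2$ always holds. Hence \eqref{eq:good-L-Fi} is equivalent to $\dim(\mathcal F_i+L)\ge 2r+m-2$, which is again an open condition. Since there are finitely many conditions, their intersection is Zariski open. In an irreducible variety, a finite intersection of nonempty Zariski-open sets is nonempty and dense, and its complement is a proper algebraic subset. So it suffices to exhibit, separately for each individual condition, one collection in $\mathcal X$ satisfying it.

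\textbf{Existence for each condition.} I will use Corollary~\ref{cor: rado-quotas} throughout, working in quotient spaces and then lifting.
\begin{itemize}
\item For \eqref{eq:good-L-direct}, apply the corollary with $E_i=\mathcal F_i$ and $k_i=2$. We need $\dim\sum_{i\in S}\mathcal F_i\ge 2|S|$. This follows from \eqref{eq: m-LP F-form} when $|S|\ge2$, and from $\dim\mathcal F_i=m\ge3$ when $|S|=1$.
\item For \eqref{eq:good-L-Fi} with fixed $i$, work in $X/\mathcal F_i$ with $E_k=$ the image of $\mathcal F_k$ and $k_k=2$ for $k\neq i$. For nonempty $S\subseteq[r]\setminus\{i\}$, applying \eqref{eq: m-LP F-form} to $S\cup\{i\}$ gives
\[
\dim\Bigl(\mathcal F_i+\sum_{k\in S}\mathcal F_k\Bigr)-m\ \ge\ 2|S|+1\ \ge\ 2|S|.
\]
Lift each chosen $2$-plane to a $2$-plane $L_k\subseteq\mathcal F_k$ mapping isomorphically onto it, and take $L_i\subseteq\mathcal F_i$ arbitrary. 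This gives $\dim(\mathcal F_i+L)=m+2(r-1)$.
\item For \eqref{eq:good-L-pair} with fixed $i\neq j$, full independence modulo $\mathcal F_i+\mathcal F_j$ is not available, because $\dim(\mathcal F_i+\mathcal F_j)$ can be as large as $2m$. The trick is to use only one extra vector of $\mathcal F_j$. Again work in $X/\mathcal F_i$, now with quotas $k_j=3$ and $k_k=2$ for $k\neq i,j$. The required inequality $\dim(\mathcal F_i+\sum_{k\in S}\mathcal F_k)-m\ge 2|S|+[j\in S]$ holds by the same computation, since the bound above is $2|S|+1$. This yields a $3$-dimensional $W_j\subseteq\mathcal F_j$ and $2$-planes $L_k$ such that
\[
\dim\Bigl(\mathcal F_i+W_j+\sum_{k\neq i,j}L_k\Bigr)=m+3+2(r-2)=2r+m-1.
\]
Now take any $L_j\subseteq W_j$ and any $L_i$. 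Then $\mathcal F_i+\mathcal F_j+L$ contains this space, so \eqref{eq:good-L-pair} holds.
\end{itemize}

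The step I expect to need the most care is \eqref{eq:good-L-pair}: the naive quotient by $\mathcal F_i+\mathcal F_j$ fails, and the fix is the ``$3$ vectors from $\mathcal F_j$'' quota above. A secondary point is to make the openness and irreducibility argument precise. Here I will parametrize $\mathcal X$ locally by full-rank $m\times2$ coefficient matrices, so that all conditions become nonvanishing of polynomial minors, and I will note that Zariski-open dense implies dense in the Euclidean topology as well.
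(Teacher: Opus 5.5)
Your proof is correct and follows the paper's argument: irreducibility of $\mathcal X$, plus one Rado-type witness for each condition, using the same quotas as the paper (2 from every $\mathcal F_s$; all of $\mathcal F_i$ and 2 from each other space; all of $\mathcal F_i$, 3 from $\mathcal F_j$, and 2 from the rest). Working in $X/\mathcal F_i$ is just a repackaging of the paper's quota $k_i=m$ on $\mathcal F_i$. Your remark that the upper bound in \eqref{eq:good-L-Fi} holds automatically makes explicit what the paper leaves implicit when it calls $\mathcal B_i$ a rank-drop set.
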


\begin{proof}
For $\mathbf L=(L_1,\ldots,L_r)\in\mathcal X$, write
$L=\sum_sL_s$. Consider the following bad sets:
\[
\mathcal B_0
:=
\{\mathbf L\in\mathcal X:\dim L<2r\},
\]
\[
\mathcal B_i
:=
\left\{
\mathbf L\in\mathcal X:
\dim(\mathcal F_i+L)<2r+m-2
\right\},
\]
and, for $i\neq j$,
\[
\mathcal B_{i,j}
:=
\left\{
\mathbf L\in\mathcal X:
\dim(\mathcal F_i+\mathcal F_j+L)<2r+m-1
\right\}.
\]
Each of these is an algebraic subset of $\mathcal X$, since its defining
condition is a rank-drop condition. We show that each is proper.

First, apply Rado's theorem (Corollary \ref{cor: rado-quotas}) with quota $2$ from every $\mathcal F_s$.
The required inequalities follow immediately from the hypothesis, and
hence there exist two-dimensional subspaces
\[
L_s\subseteq\mathcal F_s
\]
such that
\[
L_1\oplus\cdots\oplus L_r
\]
is a direct sum. Thus $\mathcal B_0$ is proper.

Next, fix $i\in[r]$ and apply Rado's theorem with quotas
\[
k_i=m,
\qquad
k_s=2\quad(s\neq i).
\]
If $i\in S$, then
\[
\sum_{s\in S}k_s
=
m+2(|S|-1)
=
2|S|+m-2
\leq
\dim\left(\sum_{s\in S}\mathcal F_s\right),
\]
and the singleton Rado inequalities follow from $\dim\mathcal{F}_j=m\geq k_j$ for all $j$. Hence there exist
two-dimensional subspaces $L_s\subseteq\mathcal F_s$, $s\neq i$, such
that
\[
\mathcal F_i
\oplus
\bigoplus_{s\neq i}L_s
\]
is a direct sum. Choosing any two-dimensional
$L_i\subseteq\mathcal F_i$, we obtain
\[
\dim(\mathcal F_i+L)=m+2(r-1)=2r+m-2.
\]
Thus $\mathcal B_i$ is proper.

Finally, fix $i\neq j$ and apply Rado's theorem with quotas
\[
k_i=m,\qquad
k_j=3,\qquad
k_s=2\quad(s\notin\{i,j\}).
\]
For every $S\subseteq[r]$ with $|S|\geq2$,
\[
\sum_{s\in S}k_s
=
2|S|
+(m-2)\mathbf 1_{i\in S}
+\mathbf 1_{j\in S}
\leq
2|S|+m-1,
\]
so the Rado inequalities again follow from the hypothesis. Therefore
there exists a 3-dimensional subspace
\[
M_j\subseteq\mathcal F_j
\]
and two-dimensional subspaces
\[
L_s\subseteq\mathcal F_s,
\qquad s\notin\{i,j\},
\]
such that
\[
\mathcal F_i
\oplus M_j
\oplus
\bigoplus_{s\notin\{i,j\}}L_s
\]
is a direct sum.

Thus, for any arbitrary two-dimensional subspaces
\[
L_i\subseteq\mathcal F_i,
\qquad
L_j\subseteq M_j,
\]
we get
\[
\dim(\mathcal F_i+\mathcal F_j+L)
\geq
m+3+2(r-2)
=
2r+m-1.
\]
Therefore $\mathcal B_{i,j}$ is proper.

The space $\mathcal X$ is irreducible, and hence a finite union of
proper algebraic subsets cannot cover $\mathcal X$. Therefore
\[
\mathcal B_0
\cup
\bigcup_i\mathcal B_i
\cup
\bigcup_{i\neq j}\mathcal B_{i,j}
\]
is a proper algebraic subset of $\mathcal X$. Its complement is exactly
the set of good collections, and is therefore open and dense.
\end{proof}

\begin{proof}[Proof of Lemma \ref{lem: generic-P-good}]
Let
\[
\mathcal U
:=
\left\{
P\in\mathbb R^{d\times D}:
\rank(PF_i)=d
\text{ for every }i
\right\}.
\]
The complement of $\mathcal U$ is a proper algebraic set. For every
$P\in\mathcal U$,
\[
\dim(\mathcal F_i\cap\ker P)=m-d=2,
\]
and hence there is a natural algebraic map
\[
\Phi:\mathcal U\longrightarrow\mathcal X,
\qquad
\Phi(P)
=
\bigl(
\mathcal F_1\cap\ker P,\ldots,
\mathcal F_r\cap\ker P
\bigr).
\]

Let $\mathcal G\subseteq\mathcal X$ denote the set of good collections
(per Definition~\ref{def: good-collection}). By
Lemma~\ref{lem: generic-good-collection}, the set of bad collections
\[
\mathcal B:=\mathcal X\setminus\mathcal G
\]
is a proper algebraic subset of $\mathcal X$; in fact, the lemma proves
the stronger statement that $\mathcal G$ is open and dense in
$\mathcal X$.

Therefore
\[
\Phi^{-1}(\mathcal B)
\]
is an algebraic subset of $\mathcal U$. To show that it is proper, it
suffices to exhibit a point
\[
P_0\in\mathcal U
\]
such that
\[
\Phi(P_0)\in\mathcal G.
\]
In fact, we prove the stronger statement $\mathcal G\subseteq\Phi(\mathcal U)$ (every good collection is induced by some $P_0\in\mathcal U$).

Let
\[
(L_1,\ldots,L_r)\in\mathcal G
\]
be any good collection, and write
\[
L:=L_1\oplus\cdots\oplus L_r.
\]
We construct $P_0\in\mathcal{U}$ such that $\Phi(P_0)=(L_1,\ldots,L_r)$.

Let
\[
B\in\mathbb R^{(D-2r)\times D}
\]
be a matrix whose rows form a basis of $L^\perp$. Choose a generic
\[
Q\in\mathbb R^{d\times(D-2r)}
\]
and set
\[
P_0:=QB.
\]
Then
\[
L\subseteq\ker P_0.
\]

By Definition~\ref{def: good-collection} (Equations \eqref{eq:good-L-direct} and \eqref{eq:good-L-Fi}),
\[
\mathcal F_i\cap L=L_i
\qquad\text{for every }i.
\]
Since $\ker B=L$, it follows that
\[
\ker(B|_{\mathcal F_i})
=
\mathcal F_i\cap L
=
L_i.
\]
Therefore
\[
\rank(BF_i)
=
\dim\mathcal F_i-\dim L_i
=
m-2
=
d.
\]
Since $D-2r>d$, a generic $Q\in\mathbb{R}^{d\times (D-2r)}$ satisfies
\[
\rank(P_0F_i)
=
\rank(QBF_i)
=
d
\qquad\text{for every }i.
\]
Thus $P_0\in\mathcal U$.

Moreover, the last equation implies that for all $i$,
\[
L_i=\mathcal F_i\cap\ker P_0,
\]
(because $\dim(\mathcal F_i\cap\ker P_0)=m-\rank(P_0F_i)=m-d=2=\dim L_i$ and $L_i\subseteq\mathcal{F}_i\cap\ker P_0$).

Therefore
\[
\Phi(P_0)=(L_1,\ldots,L_r)\in\mathcal G,
\]
and so
\[
P_0\notin\Phi^{-1}(\mathcal B).
\]
Thus $\Phi^{-1}(\mathcal B)$ is a proper algebraic subset of
$\mathcal U$.

Consequently, its complement is open and dense in $\mathcal U$.
Since the complement of $\mathcal U$ is itself a proper algebraic set,
the collection
\[
\bigl(
\mathcal F_1\cap\ker P,\ldots,
\mathcal F_r\cap\ker P
\bigr)
\]
is good for generic $P$.
\end{proof}

\subsection{LP-independent properties of the contractions}
The following structural properties of the component contractions do not rely on
Condition~\eqref{eq: m-LP}; they hold for arbitrary rank-one components whenever
the stated rank assumptions are satisfied.

\begin{lemma}
\label{lem: image omega intersection}
Let
\[
F_i=
\begin{bmatrix}
\bar a_i^{(1)}|\cdots|\bar a_i^{(m)}
\end{bmatrix}
\in\mathbb R^{D\times m},
\qquad
\mathcal F_i:=\im F_i,
\]
and let \(P\in\mathbb R^{(m-2)\times D}\). Then
\[
\rank \omega_i(P)\in\{0,2\},
\]
and
\[
\rank \omega_i(P)=2
\quad\Longleftrightarrow\quad
\rank(PF_i)=m-2.
\]
Whenever these equivalent conditions hold,
\[
\im\omega_i(P)=\mathcal F_i\cap\ker P.
\]
\end{lemma}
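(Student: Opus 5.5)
The plan is to write $\omega_i(P)$ as a pullback of a fixed skew form on $\mathbb R^m$. By Equation~\eqref{eq: omega_i via M},
\[
\omega_i(P)=F_i\,W(M_i)\,F_i^\top,
\]
where $W(M)\in\mathbb R^{m\times m}$ is the skew-symmetric matrix with entries $W_{jk}=(-1)^{j+k+1}\det(M^{\widehat{j,k}})$ for $j<k$. Since $F_i$ has full column rank $m$ (its columns lie in distinct summands of $X$, and we work in compressed form, so $a_i^{(j)}\neq 0$), we get $\rank\omega_i(P)=\rank W(M_i)$ and $\im\omega_i(P)=F_i\,\im W(M_i)$. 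The statement therefore reduces to a claim about the $(m-2)\times m$ matrix $M:=M_i=PF_i$ alone.

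The central claim is this: if $\rank M<m-2$ then $W(M)=0$, while if $\rank M=m-2$ then $\rank W(M)=2$ and $\im W(M)=\ker M$. The first case is immediate, since every $(m-2)$-minor vanishes. For the second case, I would view $W(M)$ as the bivector $\star(p_1\wedge\cdots\wedge p_{m-2})$, the Hodge dual of the wedge of the rows of $M$. Concretely, I would verify the identity
\[
W(M)\,x=\pm\bigl(\text{vector whose }j\text{th entry is }\det[M;\,x^\top;\,e_j^\top]\bigr),
\]
that is, the cofactor/Laplace expansion of the $m\times m$ determinant with rows $M$, $x^\top$, $e_j^\top$. Two consequences follow. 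First, $MW(M)=0$: the $(t,k)$ entry is a determinant with a repeated row $p_t$, so $\im W(M)\subseteq\ker M$, which has dimension $2$. Second, $W(M)\neq 0$ because some maximal minor is nonzero. A nonzero skew matrix has even rank, and its rank is at most $\dim\ker M=2$, so $\rank W(M)=2$ and $\im W(M)=\ker M$. This also establishes $\rank\omega_i\in\{0,2\}$ together with the stated equivalence.

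Finally I transport back through $F_i$: $F_i(\ker PF_i)=\{F_iu: PF_iu=0\}=\mathcal F_i\cap\ker P$, which gives $\im\omega_i(P)=\mathcal F_i\cap\ker P$. The main obstacle is the sign bookkeeping in identifying $MW(M)=0$ via Laplace expansion, namely matching $(-1)^{j+k+1}$ with the cofactor signs. I expect this to be routine but fiddly, and I would first check it for $m=3$, where $W$ is the cross-product matrix of the single row $p$, and then argue in general with $M W(M)$ expanded as the determinant of $\begin{bmatrix}p_t\\ M\end{bmatrix}$ with columns $j,k$ handled appropriately.
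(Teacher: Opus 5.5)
Your proof is correct, and its skeleton matches the paper's. The case $\rank(PF_i)<m-2$ is handled by the vanishing of all maximal minors. When $\rank(PF_i)=m-2$, both arguments place the image in a two-dimensional space ($\mathcal F_i\cap\ker P$, respectively $\ker M_i$), get nonvanishing from a nonzero maximal minor, and finish with the even rank of skew-symmetric matrices.

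The genuine difference is how you get the annihilation $P\,\omega_i(P)=0$. The paper uses the alternating formalism: $\omega_i(P)=C_P(\Alt(\bar{\mathcal T}_i))$, so contracting once more by a row of $P$ gives zero by anticommutativity of contractions (Lemma~\ref{lem: anticommutativity}). You instead factor $\omega_i(P)=F_iW(M_i)F_i^\top$ and prove $MW(M)=0$ by a cofactor expansion. The signs you were worried about do work out. Using $W_{jk}=-W_{kj}=(-1)^{j+k}\det M^{\widehat{j,k}}$ for $j>k$, one gets $\bigl(MW(M)\bigr)_{tk}=(-1)^k\det N_{t,k}$. Here $N_{t,k}$ is obtained by placing the $t$th row of $M$ on top of $M$ and deleting column $k$, so it has a repeated row and its determinant vanishes.

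What each route buys:
\begin{itemize}
\item Your version is self-contained. It needs only the defining formula \eqref{eq: omega_i via M}, not Lemma~\ref{lem: conceptual T[j,k]} or the appendix.
\item It also shows exactly where injectivity of $F_i$ is used, namely in $\im\omega_i(P)=F_i\ker(PF_i)=\mathcal F_i\cap\ker P$.
\item The paper's version is shorter given its machinery. It also exposes the underlying reason, the wedge identity $p_t\wedge p_1\wedge\cdots\wedge p_{m-2}=0$ behind your Hodge-dual remark.
\end{itemize}

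One correction to your justification. Full column rank of $F_i$, i.e.\ $a_i^{(j)}\neq 0$ for all $j$, does not follow from compression: compression only asserts $\rank A^{(j)}=r_j$ and permits zero columns. Nonvanishing must be taken as a standing hypothesis. The paper also uses it tacitly, in $\dim(\mathcal F_i\cap\ker P)=m-\rank(PF_i)$ and in passing from a nonzero minor to $\omega_i(P)\neq 0$.

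The hypothesis is genuinely needed. Take $m=3$, $a_i^{(1)}=0$ and $a_i^{(2)},a_i^{(3)}\neq 0$. Then every term of $\omega_i(P)$ contains either $\bar a_i^{(1)}$ or the factor $P\bar a_i^{(1)}=0$, so $\omega_i(P)=0$. Yet $\rank(PF_i)=1=m-2$ for generic $P$, so the stated equivalence fails.
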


\begin{proof}
Since \(P F_i\in\mathbb R^{(m-2)\times m}\), we have
\[
\rank(PF_i)\le m-2.
\]
If \(\rank(PF_i)<m-2\), then every \((m-2)\times(m-2)\) minor of
\(PF_i\) vanishes. Hence, by Equation~\eqref{eq: omega_i via M},
\[
\omega_i(P)=0.
\]

Suppose now that \(\rank(PF_i)=m-2\). By
Equation~\eqref{eq: omega_i via M}, every column of \(\omega_i(P)\)
belongs to \(\mathcal F_i\), and by anticommutativity of contractions on alternating tensors (Lemma \ref{lem: anticommutativity}), every row \(s\) of \(P\) satisfies
\[
s\omega_i(P)=0.
\]
Therefore
\[
\im\omega_i(P)\subseteq \mathcal F_i\cap\ker P.
\]
By rank-nullity,
\[
\dim(\mathcal F_i\cap\ker P)
=
m-\rank(PF_i)
=
2.
\]
Moreover, since \(\rank(PF_i)=m-2\), some maximal minor of \(PF_i\) is
nonzero, and hence Equation~\eqref{eq: omega_i via M} implies
\(\omega_i(P)\neq0\). Since \(\omega_i(P)\) is skew-symmetric, its rank
is at least \(2\). Together with
\[
\im\omega_i(P)\subseteq\mathcal F_i\cap\ker P
\]
this gives
\[
\rank\omega_i(P)=2
\qquad\text{and}\qquad
\im\omega_i(P)=\mathcal F_i\cap\ker P.
\]

Thus
\[
\rank\omega_i(P)=2
\iff
\rank(PF_i)=m-2,
\]
and otherwise \(\omega_i(P)=0\).
\end{proof}

\begin{lemma}
\label{lemma: omega_i(P)=rho omega_i(Q)}
Let $P,Q\in\mathbb{R}^{(m-2)\times D}$, and suppose that
\[
\rank(PF_i)=\rank(QF_i)=m-2
\]
and
\[
\ker(PF_i)=\ker(QF_i).
\]
Then there exists a nonzero scalar $\rho_i$ such that
\[
\omega_i(Q)=\rho_i\,\omega_i(P).
\]
\end{lemma}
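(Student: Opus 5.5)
Since $\ker(PF_i)=\ker(QF_i)$ and both $PF_i,QF_i\in\mathbb R^{(m-2)\times m}$ have full row rank $m-2$, the two matrices have the same row space: it is the annihilator of the common two-dimensional kernel. Hence there is a unique $R_i\in\operatorname{GL}_{m-2}(\mathbb R)$ with
\[
PF_i=R_i\,QF_i .
\]
Write $M_i:=PF_i$ and $N_i:=QF_i$, so that $M_i=R_iN_i$. Deleting columns $j$ and $k$ commutes with left multiplication by $R_i$, so $M_i^{\widehat{j,k}}=R_iN_i^{\widehat{j,k}}$. Multiplicativity of the determinant then gives, for every pair $j<k$,
\[
\det\bigl(M_i^{\widehat{j,k}}\bigr)=\det(R_i)\,\det\bigl(N_i^{\widehat{j,k}}\bigr).
\]

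Now substitute this into Equation~\eqref{eq: omega_i via M}. In that formula, $\omega_i(P)$ is a linear combination of the fixed skew matrices $\bar a_i^{(j)}(\bar a_i^{(k)})^\top-\bar a_i^{(k)}(\bar a_i^{(j)})^\top$. These matrices depend only on the component $i$ and not on $P$. The coefficients are $(-1)^{j+k+1}\det(M_i^{\widehat{j,k}})$. Every coefficient scales by the same factor $\det(R_i)$ when $P$ is replaced by $Q$, so
\[
\omega_i(P)=\det(R_i)\,\omega_i(Q),
\qquad\text{that is,}\qquad
\omega_i(Q)=\rho_i\,\omega_i(P)\quad\text{with }\rho_i:=\det(R_i)^{-1}.
\]
This value of $\rho_i$ agrees with the one used in Equation~\eqref{eq:rho-evaluation-ratio}. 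It is nonzero because $R_i$ is invertible.

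There is no real obstacle here. The only point needing care is the existence of $R_i$. Two full-row-rank $(m-2)\times m$ matrices with equal kernels have equal row spaces, namely $\ker^\perp$. Therefore each row of $PF_i$ is a combination of the rows of $QF_i$, and the resulting coefficient matrix is invertible by a rank count. By Lemma~\ref{lem: image omega intersection}, $\omega_i(P)\neq0$, so $\rho_i$ is determined uniquely; this uniqueness is not needed for the statement.
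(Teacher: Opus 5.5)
Your proof is correct and follows essentially the same route as the paper. Equal kernels give equal row spaces, hence an invertible $R_i$ with $PF_i=R_iQF_i$. Deleting columns $j,k$ commutes with left multiplication, so every maximal minor scales by $\det(R_i)$. Substituting into Equation~\eqref{eq: omega_i via M} then yields $\rho_i=\det(R_i)^{-1}\neq 0$.
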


\begin{proof}
Since $PF_i$ and $QF_i$ have the same kernel, they have the same row space:
\[
\operatorname{row}(PF_i)
=
\ker(PF_i)^\perp
=
\ker(QF_i)^\perp
=
\operatorname{row}(QF_i).
\]

Since both matrices have rank $m-2$, there exists an invertible matrix
\[
R_i\in\mathbb{R}^{(m-2)\times(m-2)}
\]
such that
\[
PF_i=R_iQF_i.
\]

By Equation \ref{eq: omega_i via M},
\[
    \omega_i(P)=\sum_{1\leq j<k\leq m}
(-1)^{j+k+1}
\det\!\left(PF_i^{\widehat{j,k}}\right)
\left(
\bar a_i^{(j)}\bigl(\bar a_i^{(k)}\bigr)^\top
-
\bar a_i^{(k)}\bigl(\bar a_i^{(j)}\bigr)^\top
\right),
\]

where $PF_i^{\widehat{j,k}}$ denotes the $(m-2)\times(m-2)$ matrix obtained from $PF_i$ by deleting columns $j$ and $k$.

Since
\[
PF_i=R_iQF_i,
\]
deleting the same two columns gives
\[
PF_i^{\widehat{j,k}}
=
R_iQF_i^{\widehat{j,k}}.
\]
Therefore
\[
\det\!\left(PF_i^{\widehat{j,k}}\right)
=
\det(R_i)
\det\!\left(QF_i^{\widehat{j,k}}\right).
\]
Substituting this identity into every block of $\omega_i(P)$ yields
\[
\omega_i(P)=\det(R_i)\,\omega_i(Q).
\]
Thus, setting
\[
\rho_i:=\det(R_i)^{-1},
\]
we obtain
\[
\omega_i(Q)=\rho_i\,\omega_i(P).
\]
\end{proof}

\section{The LP condition is poly-time verifiable}
\label{sec: verification}

For completeness, we prove that Condition \eqref{eq: m-LP} is verifiable in $\mathrm{poly}(D, r)$.

\begin{claim}
\label{claim:LP-verification}
Given $A^{(1)},\dots,A^{(m)}$, Condition~\eqref{eq: m-LP} can be verified deterministically
 in time $\mathrm{poly}(D,r)$ where $D:=\sum_{j=1}^m\rank A^{(j)}$.
\end{claim}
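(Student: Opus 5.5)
We will verify Condition~\eqref{eq: m-LP} in its subspace form \eqref{eq: m-LP F-form}, reducing it to a matroid-union / submodular minimization problem. Consider the linear matroid $\mathcal M$ on the ground set of $mr$ vectors $\{\bar a_i^{(j)}:i\in[r],j\in[m]\}\subseteq X$. The rank function $f(S):=\dim\bigl(\sum_{s\in S}\mathcal F_s\bigr)=\sum_j\rank A^{(j)}(S)$ is a sum of matroid rank functions restricted to unions of groups. It is therefore submodular and monotone on $2^{[r]}$. The condition asks that
\[
g(S):=f(S)-2|S|-(m-1)\ \geq 0\qquad\text{for all }S\subseteq[r],\ |S|\geq 2 .
\]
The function $g$ is submodular, since it is submodular minus a modular function minus a constant. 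We must therefore decide whether the minimum of $g$ over sets of size at least two is nonnegative.

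The restriction $|S|\ge 2$ is handled by enumeration. For each of the $\binom r2$ pairs $\{a,b\}$, minimize $g$ over the lattice of sets containing $\{a,b\}$. Equivalently, minimize the submodular function $S'\mapsto g(S'\cup\{a,b\})$ over $S'\subseteq[r]\setminus\{a,b\}$; this function is still submodular, because contraction preserves submodularity. Each minimization is performed by a strongly polynomial submodular function minimization algorithm (Iwata--Fleischer--Fujishige, or Schrijver). This requires $\mathrm{poly}(r)$ evaluations of $g$. Each evaluation is a rank computation of the $r_j\times|S|$ submatrices, done by Gaussian elimination in $\mathrm{poly}(D,r)$ operations (here $r_j\le D$). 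The condition holds iff all $\binom r2$ minima are $\geq 0$. Since we work in the exact algebraic model over $\mathbb R$, the rank oracle is deterministic, so the whole procedure is deterministic in $\mathrm{poly}(D,r)$ time. We will also remark that $m\le D$, since we may assume every $r_j\geq 1$.

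The main obstacle is not the algorithm itself but making sure the reduction is airtight. We need submodularity of $f$, which holds because each $S\mapsto\rank A^{(j)}(S)$ is a matroid rank function and sums of these are submodular. We also need the nonemptiness constraint handled correctly, which the pair enumeration does. If one prefers to avoid general SFM, an alternative route is a matroid-union argument in the spirit of Corollary~\ref{cor: rado-quotas}. By Rado/Edmonds, $\min_{S\supseteq\{a,b\}}[f(S)-2|S|]$ is computable via a matroid intersection between the linear matroid and a partition matroid with quota $2$ per component (and adjusted quotas on $a,b$). We would present SFM as the primary argument, since it is the most direct, and mention the matroid-intersection formulation as a concrete combinatorial alternative.
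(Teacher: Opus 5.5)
Your proposal is correct and matches the paper's proof step for step. Both show that $f(S)-2|S|$ is submodular, enumerate the $\binom r2$ pairs to enforce $|S|\ge 2$, minimize $R\mapsto f(R\cup\{i,j\})$ by polynomial-time submodular function minimization, and evaluate $f$ by rank computations. Your matroid-intersection alternative is an optional remark and is not needed for the claim.
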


\begin{proof}

Define

$$
f(S)
:=\sum_{j=1}^m \rank A^{(j)}_S
-2|S|.
$$

Each matrix-rank function is a matroid rank function and hence submodular,
while $2|S|$ is modular. Therefore $f$ is submodular.

Condition~\eqref{eq: m-LP} is equivalent to

$$
f(S)\geq m-1
\qquad
\text{for every }S\subseteq[r]\text{ with }|S|\geq 2.
$$

For each pair $i\neq j$, define on $[r]\setminus\{i,j\}$

$$
f_{i,j}(R):=f(R\cup\{i,j\}).
$$

The function $f_{i,j}$ is again submodular. Thus, using polynomial-time
submodular-function minimization, we can compute

$$
\min_{S\supseteq\{i,j\}} f(S)
$$

in polynomial time. Condition~\eqref{eq: m-LP} holds if and only if this
minimum is at least $m-1$ for every pair $i\neq j$.

There are only $\binom{r}{2}$ pairs, and each value of $f$ can be computed
in $\mathrm{poly}(D)$ by computing the ranks of $m$ matrices. Hence
Condition~\eqref{eq: m-LP} is deterministically verifiable in polynomial time.
\end{proof}

\section{Uniqueness proof}
\label{sec: uniqueness}

\begin{lemma}
\label{lem: uniqueness}
Let
\[
\mathcal{T}
=
\llbracket A^{(1)},\dots,A^{(m)}\rrbracket
\in
\mathbb{R}^{r_1\times\dots\times r_m}
\]
be an $r$-decomposition satisfying Condition~\eqref{eq: m-LP}. Then
$\llbracket A^{(1)},\dots,A^{(m)}\rrbracket$ is the unique rank
decomposition of $\mathcal T$.
\end{lemma}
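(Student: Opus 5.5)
The plan is to prove uniqueness by running the spectral machinery of Lemma~\ref{lem: LP-generic-spectral} against a hypothetical competing decomposition. The key point is that $\Omega_P$ and $\Omega_Q$ are computed from $\mathcal T$ alone (Corollary~\ref{cor: computational T[j,k]}), so they are the same no matter which decomposition is used to expand them. Suppose
\[
\mathcal T=\sum_{i=1}^{r'} b_i^{(1)}\otimes\cdots\otimes b_i^{(m)}
\]
with $r'\le r$. We may drop zero terms, so every $b_i^{(j)}\neq 0$. Set $F'_i:=[\bar b_i^{(1)}|\cdots|\bar b_i^{(m)}]$. Its columns are nonzero and lie in distinct summands of $X$, so $\dim\im F'_i=m$. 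Apply Corollary~\ref{cor: Omega= sum_i omega_i} to both expansions to get
\[
\Omega_P=\sum_{i=1}^r\omega_i(P)=\sum_{i=1}^{r'}\omega'_i(P),
\]
where $\omega'_i$ is built from $F'_i$ by Equation~\eqref{eq: omega_i via M}.

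\textbf{Rank count.} I would choose $P$ generic for both families at once. This is legitimate because we only impose finitely many nonvanishing polynomial conditions: those of Lemma~\ref{lem: generic-P-good}, together with $\rank(PF'_i)=m-2$ for all $i$. The latter are nontrivial conditions since each $F'_i$ has full column rank. Then Lemma~\ref{lem: LP-generic-spectral} Part~\ref{part: generic p} gives $\rank\Omega_P=2r$. By Lemma~\ref{lem: image omega intersection}, each $\omega'_i(P)$ has rank $2$, so $2r\le 2r'$. Hence $r'=r$, which proves $r$ is the tensor rank. Moreover the images $L'_i:=\im\omega'_i(P)$ are $r$ two-dimensional subspaces whose sum is $L=\im\Omega_P$, which has dimension $2r$. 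So the sum $L'_1\oplus\cdots\oplus L'_r=L$ is direct.

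\textbf{Matching pencils.} Next take a generic $q\in\ker\Omega_P=L^\perp$ and form $Q$. For the true decomposition, Part~\ref{part: generic q} gives $\Omega_Q=\sum\rho_i\omega_i(P)$ with the $\rho_i$ distinct. For the competing one, Lemma~\ref{lem: image omega intersection} gives $L'_i=\im F'_i\cap\ker P\subseteq L$, so $Q$ vanishes on $L'_i$. Also $\rank(QF'_i)=m-2$ for generic $q$, by the same proper-variety argument used in Part~\ref{part: generic q}: $p$ itself is a witness in $L^\perp$. It follows that $\ker(QF'_i)=\ker(PF'_i)$, and Lemma~\ref{lemma: omega_i(P)=rho omega_i(Q)} yields $\Omega_Q=\sum\rho'_i\omega'_i(P)$. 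Running the computation in Part~\ref{part: Phi eigenspace} with the decomposition $L=\bigoplus L'_i$ shows $\Phi|_{U(L'_i)}=\rho'_i I$. That computation used only directness and the two sum formulas, not distinctness. Since $\Phi$ has eigenvalues $\rho_1,\dots,\rho_r$ with two-dimensional eigenspaces $E_i=U(L_i)$, each $U(L'_i)$ lies in some $E_{\sigma(i)}$. Comparing dimensions forces $U(L'_i)=E_{\sigma(i)}$, and directness makes $\sigma$ a permutation. So $L'_i=L_{\sigma(i)}$.

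\textbf{Conclusion and main obstacle.} Let $\pi_i$ be the projection onto $L_{\sigma(i)}$ along the other summands, extended by zero on $L^\perp$. Then both $\omega'_i(P)$ and $\omega_{\sigma(i)}(P)$ equal $\pi_i\Omega_P$. For $\omega'_i(P)$ this holds because its image lies in $L'_i$ and its kernel contains $L^\perp$; the same reasoning applies to $\omega_{\sigma(i)}(P)$. Reading off the nonzero rank-one blocks $[1,j]$, exactly as in the proof of the Algorithmic LP theorem, gives $b_i^{(j)}\in\langle a_{\sigma(i)}^{(j)}\rangle$ for every $j$. The two decompositions therefore agree up to permutation and scaling. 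The step I expect to need the most care is the simultaneous genericity: one must check that a single $(P,q)$ can be chosen that is generic for both the true and the unknown competing decomposition. Since the competitor is fixed before sampling and all conditions are finitely many proper algebraic conditions on an irreducible parameter space, this should go through.
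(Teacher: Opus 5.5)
Your proposal is correct and follows the paper's proof essentially step for step: the rank count from $\rank\Omega_P=2r$, which forces $\im\Omega_P=\bigoplus_i L'_i$; the identity $\omega'_i(Q)=\rho'_i\,\omega'_i(P)$ via Lemma~\ref{lemma: omega_i(P)=rho omega_i(Q)}; the identification of each $U(L'_i)$ with an eigenspace of $\Phi$; and finally isolating and factoring the components. The one difference is how you get $\rank(PF'_i)=\rank(QF'_i)=m-2$: you choose $(P,q)$ generic for the competitor as well, whereas the paper obtains it for free by rank forcing ($\rank\Omega_P=\rank\Omega_Q=2r$ with $r$ summands of rank at most $2$, together with Lemma~\ref{lem: image omega intersection}), so a single $(P,q)$ chosen for the original decomposition works against every competitor and the simultaneous-genericity issue you flag never arises.
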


\begin{proof}
We first show that $\rank(\mathcal T)=r$. Let
\[
\mathcal T
=
\sum_{\ell=1}^s
\widetilde a_\ell^{(1)}
\otimes\cdots\otimes
\widetilde a_\ell^{(m)}
\]
be any decomposition of $\mathcal T$ into $s$ rank-one tensors. By
linearity,
\[
\Omega_P
=
\sum_{\ell=1}^s
\widetilde\omega_\ell(P).
\]

By Lemma~\ref{lem: image omega intersection},
every contracted rank-one component has rank at most $2$, and hence
\[
2r
=
\rank(\Omega_P)
\leq
\sum_{\ell=1}^s
\rank\bigl(\widetilde\omega_\ell(P)\bigr)
\leq 2s.
\]
Thus $s\geq r$. Since an $r$-decomposition is given,
\[
\rank(\mathcal T)=r.
\]

It remains to prove uniqueness. This does not follow immediately from
the recovery guarantee, since a hypothetical second rank decomposition
is not known a priori to satisfy Condition~\eqref{eq: m-LP}. However,
Condition~\eqref{eq: m-LP} has already served its purpose: by
Lemma~\ref{lem: LP-generic-spectral}, the matrix $\Phi$ has exactly
$r$ distinct eigenvalues, each with a two-dimensional eigenspace.

Suppose
\[
\mathcal T
=
\sum_{i=1}^r
\widetilde a_i^{(1)}
\otimes\cdots\otimes
\widetilde a_i^{(m)}
\]
is another rank decomposition. Then
\[
\Omega_P=\sum_{i=1}^r\widetilde\omega_i(P),
\qquad
\Omega_Q=\sum_{i=1}^r\widetilde\omega_i(Q).
\]
Since both $\Omega_P$ and $\Omega_Q$ have rank $2r$, while every
summand has rank at most $2$, necessarily
\[
\rank\widetilde\omega_i(P)
=
\rank\widetilde\omega_i(Q)
=
2
\qquad\text{for every }i,
\]
and
\[
L=\bigoplus_{i=1}^r\widetilde L_i,
\qquad
\widetilde L_i:=\im\widetilde\omega_i(P).
\]

By Lemma~\ref{lem: image omega intersection},
\[
\rank(P\widetilde F_i)
=
\rank(Q\widetilde F_i)
=
m-2,
\qquad
\widetilde L_i
=
\widetilde{\mathcal F}_i\cap\ker P.
\]
Since $\widetilde L_i\subseteq L$ and $q\in L^\perp$, while the
remaining rows of $Q$ are also rows of $P$, we have
$\widetilde L_i\subseteq\ker Q$. Since
$\widetilde{\mathcal F}_i\cap\ker Q$ is two-dimensional, it follows that
\[
\widetilde{\mathcal F}_i\cap\ker Q
=
\widetilde L_i
=
\widetilde{\mathcal F}_i\cap\ker P.
\]
Therefore
\[
\ker(P\widetilde F_i)=\ker(Q\widetilde F_i).
\]

Lemma~\ref{lemma: omega_i(P)=rho omega_i(Q)} now gives
\[
\widetilde\omega_i(Q)
=
\widetilde\rho_i\,\widetilde\omega_i(P)
\]
for every $i$.

Thus, by the same argument as in Part~\ref{part: Phi eigenspace} of
Lemma~\ref{lem: LP-generic-spectral},
$U(\widetilde L_i)$ is a two-dimensional eigenspace of the same matrix $\Phi$. Since $\Phi$ has exactly $r$ distinct two-dimensional eigenspaces, these spaces are uniquely determined. Hence, up to permutation,

$$
\widetilde L_i=L_i.
$$

Applying the corresponding spectral projector to $\widetilde\Omega_P$
isolates $\widetilde\omega_i(P)$, while the same projector isolates
$\omega_i(P)$. Therefore
$$
\widetilde\omega_i(P)=\omega_i(P)
\qquad\text{for every }i.
$$
Factoring the nonzero rank-one blocks of these matrices recovers the same
column vectors in every mode, up to scaling. Thus the two decompositions
agree up to permutation and scaling of the columns.

\end{proof}

\bibliographystyle{alphaurl}
\bibliography{references}

\section*{Acknowledgments}

I am grateful to Leonard Schulman for helpful discussions and feedback on this work.

The author was supported in part by NSF CCF-2321079.

The author used ChatGPT (OpenAI) during the preparation of this manuscript as an aid for mathematical brainstorming, checking and refining arguments, improving exposition, LaTeX formatting, and locating relevant literature. All mathematical statements, proofs, citations, and conclusions appearing in the final manuscript were independently checked and are the responsibility of the author.

\section{Appendix}
\label{sec: appendix}

\determinantalrep*

\begin{proof}
\label{app: determinantalrep}
By linearity, it suffices to evaluate the contraction on a pure tensor
\[
x^{(1)}\otimes\cdots\otimes x^{(m)}.
\]
Let
\[
F_x
:=
\begin{bmatrix}
\bar x^{(1)}
\mid\cdots\mid
\bar x^{(m)}
\end{bmatrix},
\qquad
M_x:=PF_x.
\]
For a fixed permutation $\pi\in S_d$, contraction in the modes
$s_1,\ldots,s_d$ gives
\[
\left(
x^{(1)}\otimes\cdots\otimes x^{(m)}
\right)
\times_{s_1}p_{\pi(1)}^{(s_1)}
\cdots
\times_{s_d}p_{\pi(d)}^{(s_d)}
=
\left(
\prod_{t=1}^d
p_{\pi(t)}^{(s_t)}(x^{(s_t)})
\right)
x^{(j)}\otimes x^{(k)}.
\]
Therefore its alternating contraction is
\[
(-1)^{j+k+1}
\left(
\sum_{\pi\in S_d}
\operatorname{sgn}(\pi)
\prod_{t=1}^d
p_{\pi(t)}^{(s_t)}(x^{(s_t)})
\right)
x^{(j)}\otimes x^{(k)}.
\]

The $(u,t)$ entry of $M_x^{\widehat{j,k}}$ is
\[
p_u^{(s_t)}(x^{(s_t)}).
\]
Hence, by the Leibniz formula,
\[
\det\left(M_x^{\widehat{j,k}}\right)
=
\sum_{\pi\in S_d}
\operatorname{sgn}(\pi)
\prod_{t=1}^d
p_{\pi(t)}^{(s_t)}(x^{(s_t)}).
\]
Thus
\[
T_{j,k}(P)
=
(-1)^{j+k+1}
\det\left(M_x^{\widehat{j,k}}\right)
x^{(j)}\otimes x^{(k)}
\]
for every pure tensor.

Applying this identity term-by-term to the basis expansion of
$\mathcal T$ proves the claim.
\end{proof}

\begin{lemma}[Contraction of an alternating tensor]
For \(p\in X^*\) and \(x_1,\ldots,x_k\in X\),
$$
C_p\operatorname{Alt}_k
(x_1\otimes\cdots\otimes x_k)
=
\sum_{t=1}^k
(-1)^{t-1}p(x_t)\,
\operatorname{Alt}_{k-1}
(x_1\otimes\cdots\otimes
\widehat{x_t}\otimes\cdots\otimes x_k).
$$
where $\widehat{x}_t$ denotes the omission of $x_t$.

In particular, contraction maps alternating tensors to alternating tensors.
\end{lemma}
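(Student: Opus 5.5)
The plan is to verify the identity directly from Definition~\ref{def: Alt} and Definition~\ref{def: first mode contraction}, by expanding both sides on the pure tensor $x_1\otimes\cdots\otimes x_k$ and matching terms. By definition,
\[
C_p\operatorname{Alt}_k(x_1\otimes\cdots\otimes x_k)
=
\sum_{\sigma\in S_k}\sgn(\sigma)\,p(x_{\sigma(1)})\,
x_{\sigma(2)}\otimes\cdots\otimes x_{\sigma(k)}.
\]
I will partition $S_k$ according to the value $t:=\sigma(1)\in[k]$. This isolates the scalar factor $p(x_t)$, which then depends only on the block of the partition.

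For fixed $t$, the permutations with $\sigma(1)=t$ are in bijection with bijections $\tau:\{2,\dots,k\}\to[k]\setminus\{t\}$. After relabeling, these are in bijection with permutations $\pi\in S_{k-1}$ acting on the ordered list $(x_1,\ldots,\widehat{x_t},\ldots,x_k)$. Concretely, $\sigma$ factors as the cycle moving $t$ to the front, followed by $\pi$ on the remaining positions. The cycle $(t\,t-1\,\cdots\,1)$ has sign $(-1)^{t-1}$, so $\sgn(\sigma)=(-1)^{t-1}\sgn(\pi)$. The block for $t$ therefore contributes
\[
(-1)^{t-1}p(x_t)\sum_{\pi\in S_{k-1}}\sgn(\pi)\,y_{\pi(1)}\otimes\cdots\otimes y_{\pi(k-1)},
\]
where $(y_1,\ldots,y_{k-1})=(x_1,\ldots,\widehat{x_t},\ldots,x_k)$. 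This inner sum is exactly $\operatorname{Alt}_{k-1}$ of the reduced tensor, and summing over $t$ gives the claimed formula.

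The only delicate point is the sign bookkeeping $\sgn(\sigma)=(-1)^{t-1}\sgn(\pi)$. I will check it by counting inversions. In $\sigma$, the leading entry $t$ creates exactly $t-1$ inversions, one with each smaller index that appears after it. The remaining inversions of $\sigma$ are precisely the inversions of $\pi$, because the relabeling $[k]\setminus\{t\}\to[k-1]$ is order-preserving.

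For the ``in particular'' clause, I will argue by linearity: every element of $\operatorname{Alt}^k(X)$ lies in the image of $\operatorname{Alt}_k$. Indeed, over $\mathbb R$, any $T\in\operatorname{Alt}^k(X)$ satisfies $\operatorname{Alt}_k(T)=k!\,T$. Since $C_p$ is linear, the formula shows that $C_p$ maps the image of $\operatorname{Alt}_k$ into the image of $\operatorname{Alt}_{k-1}$, which is contained in $\operatorname{Alt}^{k-1}(X)$. No step here is a real obstacle beyond the sign check.
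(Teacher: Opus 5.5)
Your proposal is correct and follows the paper's proof: you expand $C_p\operatorname{Alt}_k$ on the pure tensor, group permutations by $t=\sigma(1)$, and extract the sign $(-1)^{t-1}$ from moving $t$ to the front. Two things you add over the paper: your inversion count makes the sign bookkeeping more explicit than the paper's transposition count, and your $\operatorname{Alt}_k(T)=k!\,T$ argument proves the ``in particular'' clause, which the paper states without proof (that step also uses the fact that the image of $\operatorname{Alt}_{k-1}$ is alternating, which is the paper's Lemma~\ref{lem: alternating property}).
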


\begin{proof}
From definitions \ref{def: Alt} and \ref{def: contraction},

$$
C_p\operatorname{Alt}_k
(x_1\otimes\cdots\otimes x_k)
=
\sum_{\sigma\in S_k}
\operatorname{sgn}(\sigma)\,
p(x_{\sigma(1)})
x_{\sigma(2)}\otimes\cdots\otimes x_{\sigma(k)}.
$$

Group the permutations according to \(t=\sigma(1)\). For fixed \(t\),
the remaining entries form an arbitrary permutation of

$$
(1,\ldots,t-1,t+1,\ldots,k).
$$

Moving \(t\) from its original position to the first position requires
\(t-1\) transpositions. Hence the contribution of all permutations with
\(\sigma(1)=t\) is

$$
(-1)^{t-1}p(x_t)\,
\operatorname{Alt}_{k-1}
(x_1\otimes\cdots\otimes
\widehat{x_t}\otimes\cdots\otimes x_k).
$$

Summing over \(t\) proves the identity.
\end{proof}

\begin{lemma}[Alternating property]
\label{lem: alternating property}
For every \(\pi\in S_k\),

$$
\Pi_\pi\operatorname{Alt}_k(T)
=
\operatorname{sgn}(\pi)\operatorname{Alt}_k(T),
\qquad
T\in X^{\otimes k},
$$

where

$$
\Pi_\pi
(x_1\otimes\cdots\otimes x_k)
=
x_{\pi(1)}\otimes\cdots\otimes x_{\pi(k)}.
$$

\end{lemma}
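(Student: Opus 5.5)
By linearity of both sides in $T$, it suffices to prove the identity on pure tensors $T=x_1\otimes\cdots\otimes x_k$. For these, Definition~\ref{def: Alt} gives
\[
\operatorname{Alt}_k(x_1\otimes\cdots\otimes x_k)=\sum_{\sigma\in S_k}\operatorname{sgn}(\sigma)\,x_{\sigma(1)}\otimes\cdots\otimes x_{\sigma(k)}.
\]
Apply $\Pi_\pi$ termwise. By the definition of $\Pi_\pi$, it acts on a pure tensor $y_1\otimes\cdots\otimes y_k$ by sending it to $y_{\pi(1)}\otimes\cdots\otimes y_{\pi(k)}$. With $y_t=x_{\sigma(t)}$, the term indexed by $\sigma$ becomes $x_{\sigma(\pi(1))}\otimes\cdots\otimes x_{\sigma(\pi(k))}$, which is the pure tensor indexed by the composite $\sigma\circ\pi$.

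Next, reindex the sum by $\tau:=\sigma\circ\pi$. Since right multiplication by $\pi$ is a bijection of $S_k$, $\tau$ ranges over all of $S_k$ exactly once. Multiplicativity of the sign gives $\operatorname{sgn}(\sigma)=\operatorname{sgn}(\tau)\operatorname{sgn}(\pi^{-1})=\operatorname{sgn}(\tau)\operatorname{sgn}(\pi)$. Hence
\[
\Pi_\pi\operatorname{Alt}_k(x_1\otimes\cdots\otimes x_k)=\operatorname{sgn}(\pi)\sum_{\tau\in S_k}\operatorname{sgn}(\tau)\,x_{\tau(1)}\otimes\cdots\otimes x_{\tau(k)}=\operatorname{sgn}(\pi)\operatorname{Alt}_k(x_1\otimes\cdots\otimes x_k).
\]
Extending linearly finishes the proof. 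It also shows that $\operatorname{Alt}_k$ indeed lands in $\operatorname{Alt}^k(X)$.

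The only delicate point is the composition convention. One must check, directly from the formula for $\Pi_\pi$, that applying it to $x_{\sigma(1)}\otimes\cdots\otimes x_{\sigma(k)}$ yields the index map $\sigma\circ\pi$ rather than $\pi\circ\sigma$. In fact either order works, because both reindexings are bijections of $S_k$ and $\operatorname{sgn}(\pi)=\operatorname{sgn}(\pi^{-1})$. So the sign comes out the same regardless, and I expect no real obstacle.
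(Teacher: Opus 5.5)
Your proposal is correct and matches the paper's proof essentially step for step. Both reduce to pure tensors by linearity, apply $\Pi_\pi$ termwise to obtain the index map $\sigma\circ\pi$, and reindex by $\tau=\sigma\circ\pi$ using $\operatorname{sgn}(\sigma)=\operatorname{sgn}(\tau)\operatorname{sgn}(\pi)$.
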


\begin{proof}
By linearity, it suffices to consider
\(T=x_1\otimes\cdots\otimes x_k\). Then

$$
\Pi_\pi\operatorname{Alt}_k(T)
=
\sum_{\sigma\in S_k}
\operatorname{sgn}(\sigma)\,
x_{\sigma(\pi(1))}\otimes\cdots\otimes
x_{\sigma(\pi(k))}.
$$

Setting \(\tau=\sigma\circ\pi\), we have

$$
\operatorname{sgn}(\sigma)
=
\operatorname{sgn}(\tau)\operatorname{sgn}(\pi),
$$

and therefore

$$
\Pi_\pi\operatorname{Alt}_k(T)
=
\operatorname{sgn}(\pi)
\sum_{\tau\in S_k}
\operatorname{sgn}(\tau)\,
x_{\tau(1)}\otimes\cdots\otimes x_{\tau(k)}.
$$

\end{proof}

\begin{lemma}[Anticommutativity of contractions]
\label{lem: anticommutativity}
Let \(T\in\operatorname{Alt}^k(X)\), with \(k\geq 2\), and let
\(p,q\in X^*\). Then

$$
C_qC_p(T)=-C_pC_q(T).
$$

In particular,

$$
C_pC_p(T)=0.
$$

\end{lemma}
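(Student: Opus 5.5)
Since $C_p$, $C_q$ and $\Alt_k$ are all linear, and every $T\in\Alt^k(X)$ has the form $T=\Alt_k(S)$ for some $S\in X^{\otimes k}$, the plan is to verify the identity for $T=\Alt_k(x_1\otimes\cdots\otimes x_k)$ with $x_1,\dots,x_k\in X$ and then extend by linearity. (For instance, $T=\tfrac{1}{k!}\Alt_k(T)$ when $T$ is already alternating, by Lemma~\ref{lem: alternating property} applied termwise.)

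The computation is a double application of the contraction lemma just proved. Applying it once,
\[
C_p\Alt_k(x_1\otimes\cdots\otimes x_k)=\sum_{t=1}^k(-1)^{t-1}p(x_t)\,\Alt_{k-1}(x_1\otimes\cdots\widehat{x_t}\cdots\otimes x_k).
\]
Applying it again with $q$ to each summand gives
\[
C_qC_p(T)=\sum_{t\neq u}\varepsilon_{t,u}\,p(x_t)q(x_u)\,\Alt_{k-2}\bigl(x_1\otimes\cdots\widehat{x_t}\cdots\widehat{x_u}\cdots\otimes x_k\bigr),
\]
where $\varepsilon_{t,u}$ is a sign. When $k=2$, $\Alt_0$ is simply a scalar and the formula reads $p(x_1)q(x_2)-p(x_2)q(x_1)$.

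The one step needing care is the sign bookkeeping. After $x_t$ is removed, $x_u$ sits at position $u$ if $u<t$ and at position $u-1$ if $u>t$. Hence
\[
\varepsilon_{t,u}=(-1)^{t-1}(-1)^{u-1}\quad (u<t),\qquad \varepsilon_{t,u}=(-1)^{t-1}(-1)^{u-2}\quad (u>t),
\]
so that $\varepsilon_{t,u}=-\varepsilon_{u,t}$ for all $t\neq u$.

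The term of $C_pC_q(T)$ indexed by $(t,u)$ is obtained from the term of $C_qC_p(T)$ indexed by $(u,t)$ by swapping the roles of $p$ and $q$. It therefore carries the coefficient $\varepsilon_{u,t}\,q(x_u)p(x_t)=-\varepsilon_{t,u}\,p(x_t)q(x_u)$ in front of the same $\Alt_{k-2}$ factor. Summing over all ordered pairs $t\neq u$ yields $C_pC_q(T)=-C_qC_p(T)$.

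Setting $q=p$ then gives $C_pC_p(T)=-C_pC_p(T)$, hence $C_pC_p(T)=0$ over $\mathbb R$.
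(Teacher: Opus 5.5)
Your proof is correct, but it is more computational than the paper's. The paper never expands $T$ in a spanning set. It writes $C_qC_p(T)=(p\otimes q\otimes I^{\otimes(k-2)})(T)=(q\otimes p\otimes I^{\otimes(k-2)})(\Pi_{(12)}T)$ and then uses $\Pi_{(12)}T=-T$, i.e.\ only the skew-symmetry of $T$ in its first two slots.

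You instead reduce to $T=\Alt_k(x_1\otimes\cdots\otimes x_k)$, apply the preceding contraction formula twice, and check that $\varepsilon_{t,u}=-\varepsilon_{u,t}$. Your sign bookkeeping is right, including the $k=2$ case. One small attribution point: $T=\tfrac{1}{k!}\Alt_k(T)$ for $T\in\Alt^k(X)$ follows from the definition of $\Alt^k(X)$ together with $\Alt_k=\sum_\sigma\sgn(\sigma)\Pi_\sigma$. Lemma~\ref{lem: alternating property} gives the opposite inclusion $\im\Alt_k\subseteq\Alt^k(X)$.

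The paper's route is shorter and more general, since it needs neither the spanning-set reduction nor the earlier lemma. Yours yields more information: pairing the $(t,u)$ and $(u,t)$ terms gives
\[
C_qC_p\Alt_k(x_1\otimes\cdots\otimes x_k)=\sum_{t<u}(-1)^{t+u+1}\bigl(p(x_t)q(x_u)-p(x_u)q(x_t)\bigr)\Alt_{k-2}\bigl(x_1\otimes\cdots\widehat{x_t}\cdots\widehat{x_u}\cdots\otimes x_k\bigr).
\]
Here the antisymmetry in $(p,q)$ is visible term by term, so $C_pC_p=0$ follows without dividing by $2$. The $2\times 2$ minors are the same complementary-minor pattern that underlies Equation~\eqref{eq: slicewise contraction}; for $k=4$ your formula is exactly that equation with $m=4$.
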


\begin{proof}
By definition of contraction (definition \ref{def: contraction}),

$$
C_qC_p(T)
=
\bigl(p\otimes q\otimes I^{\otimes(k-2)}\bigr)(T),
$$
and
$$
C_pC_q(T)
=
\bigl(q\otimes p\otimes I^{\otimes(k-2)}\bigr)(T).
$$

Additionally,

$$
\bigl(p\otimes q\otimes I^{\otimes(k-2)}\bigr)(T)
=
\bigl(q\otimes p\otimes I^{\otimes(k-2)}\bigr)
\bigl(\Pi_{(12)}T\bigr)=-\bigl(q\otimes p\otimes I^{\otimes(k-2)}\bigr)
\bigl(T\bigr), 
$$

where the last equality is due to the alternating property (Lemma \ref{lem: alternating property}).

We conclude that
$$
C_qC_p(T)=
-\bigl(q\otimes p\otimes I^{\otimes(k-2)}\bigr)(T)\\=
-C_pC_q(T).
$$

Taking \(q=p\) gives

$$
C_pC_p(T)=-C_pC_p(T),
$$

and hence \(C_pC_p(T)=0\).
\end{proof}

\end{document}